\documentclass[superscriptaddress,twocolumn]{revtex4-2}

\usepackage[dvipsnames,table]{xcolor}
\usepackage{hyperref}
\hypersetup{
  breaklinks=true,
  colorlinks=true,
  allcolors=BlueViolet,
}

\usepackage{graphicx}
\graphicspath{{./figures/}} %
\usepackage[caption=false]{subfig} %
\newcommand{\sref}[1]{\protect\subref{#1}} %

\usepackage{multirow} %
\usepackage{rotating} %

\usepackage[inline]{enumitem} %
\setlist[enumerate]{label=(\arabic*)} %
\setlist{nosep} %

\usepackage{physics,braket,bm,amssymb}
\usepackage{stmaryrd}  %
\usepackage{amsmath}
\DeclareMathOperator{\rows}{rows}
\DeclareMathOperator{\integer}{\texttt{int}}

\DeclareMathOperator{\poly}{poly}
\DeclareMathOperator{\polylog}{polylog}
\DeclareMathOperator{\diag}{diag}

\renewcommand{\sp}[1]{\left[#1\right]}

\usepackage{dsfont}

\usepackage[export]{adjustbox}

\usepackage[capitalize]{cleveref}

\usepackage[yyyymmdd]{datetime} %

\usepackage{amsthm}
\newtheorem{lemma}{Lemma}

\newcommand{\GTAR}{Global Technology Applied Research, JPMorganChase, New York, NY 10017, USA}
\newcommand{\QUK}{Quantinuum, Partnership House, Carlisle Place, London SW1P 1BX, UK}
\newcommand{\QCO}{Quantinuum, 303 S. Technology Ct., Broomfield, Colorado 80021, USA}

\begin{document}

\title{Fault-tolerant execution of error-corrected quantum algorithms}

\author{Michael A. Perlin}
\email{michael.perlin@jpmchase.com}
\affiliation{\GTAR}
\author{Zichang He}
\email{zichang.he@jpmchase.com}
\affiliation{\GTAR}
\author{Anthony Alexiades Armenakas}
\affiliation{\GTAR}
\author{Pablo Andres-Martinez}
\affiliation{\QUK}
\author{Tianyi Hao}
\affiliation{\GTAR}
\author{Dylan Herman}
\affiliation{\GTAR}
\author{Yuwei Jin}
\affiliation{\GTAR}
\author{Karl Mayer}
\affiliation{\QCO}
\author{Chris Self}
\affiliation{\QUK}
\author{David Amaro}
\affiliation{\QUK}
\author{Ciaran Ryan-Anderson}
\affiliation{\QCO}
\author{Ruslan Shaydulin}
\email{ruslan.shaydulin@jpmchase.com}
\affiliation{\GTAR}

\begin{abstract}
  Scaling up quantum algorithms to tackle high-impact problems in science and industry requires quantum error correction and fault tolerance.
  While progress has been made in experimentally realizing error-corrected primitives, the end-to-end execution of logical quantum algorithms using only fault-tolerant (FT) components has remained out of reach.
  We demonstrate the FT and error-corrected execution of two quantum algorithms, the Quantum Approximate Optimization Algorithm (QAOA) and the Harrow-Hassidim-Lloyd (HHL) algorithm applied to the Poisson equation, on Quantinuum H2 and Helios trapped-ion quantum processors using the $\llbracket 7,1,3 \rrbracket$ Steane code.
  For QAOA circuits on 5 and 6 logical qubits, we show performance improvements from increasing the number of QAOA layers and the number of $T$ gates used to approximate logical rotations, despite increased physical circuit complexity. %
  We further show that QAOA circuits with up to 8 logical qubits and 9 logical $T$ gates perform similarly to unencoded circuits.
  For the largest QAOA circuits we run, with 12 logical (97 physical) qubits and 2132 physical two-qubit gates, we still observe better-than-random performance.
  Finally, we show that adding active QEC cycles and increasing the repeat-until-success limit of state preparation subroutines can improve the performance of a quantum algorithm, thereby demonstrating critical capabilities of scalable FT quantum computation.
  Our results are enabled by an FT logical $T$ gate implementation with an infidelity of $\sim2.6(4)\times10^{-3}$ and dynamic circuits with measurement-dependent feedback.
  Our work demonstrates near-break-even performance of complex, error-corrected algorithmic quantum circuits using only FT components.
\end{abstract}

\maketitle

\section{Introduction}

Realizing a broad class of high-impact applications of quantum computers at scale is widely anticipated to require quantum error correction (QEC) and fault tolerance.
Recent progress in quantum hardware has enabled demonstrations of below-threshold quantum memory~\cite{acharya2025quantum, he2025experimental}, with the logical error rate decreasing with code distance. Furthermore, quantum primitives have been demonstrated below break-even~\cite{reichardt2024demonstration, paetznick2024demonstration}, with the logical error rate falling below the physical error rate for small error-correcting codes of fixed distance. %

A central outstanding challenge is demonstrating fault-tolerant execution of complex algorithmic circuits below break-even, with the encoded circuit outperforming the same circuit executed directly using physical operations. %
Realizing such a demonstration requires all components of the QEC stack to work together to support application execution.
While the success of recent component-level demonstrations suggests this milestone may be achievable on near-term devices, many obstacles remain that prevent straightforward combination of components in an end-to-end pipeline.

One obstacle is the number of different primitives required to execute quantum algorithms in their entirety, and the variety of options available for implementing these primitives.
This challenge is particularly acute in architectures that support qubit movement, such as trapped ions~\cite{pino2021demonstration, moses2023racetrack, ransford2025helios} and neutral atoms~\cite{bluvstein2024logical, reichardt2025faulttolerant, radnaev2025universal, bluvstein2025faulttolerant}, as the flexibility afforded by qubit movement allows for a broad selection of quantum error-correcting codes~\cite{bravyi2010tradeoffs, baspin2022quantifying, dai2025locality}, gate implementations~\cite{sayginel2025faulttolerant}, and approaches to fault tolerance~\cite{cain2024correlated, zhou2025lowoverhead, cain2025fast, ataides2025neural}. %

We focus on the Steane code~\cite{steane1996multipleparticle}, which is one of the smallest and simplest error-correcting codes.
The development of specialized circuits for performing QEC~\cite{reichardt2020faulttolerant} and optimized gadgets for universal computation~\cite{goto2016minimizing} with the Steane code enabled early demonstrations of real-time fault-tolerant QEC~\cite{ryan-anderson2021realization} and the realization of a universal fault-tolerant gate set~\cite{postler2022demonstration}.
Further hardware improvements led to the achievement of break-even (that is, better-logical-than-physical) performance for select Clifford~\cite{paetznick2024demonstration} and non-Clifford~\cite{dasu2025breaking} gates.
At the same time, the Steane code has enabled \emph{application-level} error-corrected hardware benchmarking, notably including the three-qubit quantum Fourier transform~\cite{mayer2024benchmarking} and a quantum computation of molecular energies~\cite{yamamoto2025quantum}, mirroring the standard practice of application-level benchmarking of classical computing hardware~\cite{dai2019benchmarking}.
Due to the overheads of fully fault-tolerant execution, however, all demonstrations of logically encoded quantum algorithms (including Refs.~\cite{mayer2024benchmarking, yamamoto2025quantum}) have, to date, been performed only at the component level \cite{abobeih2022faulttolerant, postler2022demonstration, bluvstein2024logical, wang2024faulttolerant, lacroix2024scaling, bluvstein2025faulttolerant, dasu2026computing}, using non-fault-tolerant components \cite{mayer2024benchmarking, yamamoto2025quantum}, without error correction \cite{yamamoto2024demonstrating, hangleiter2025faulttolerant, reichardt2025faulttolerant}, or with a non-universal gate set \cite{reichardt2024demonstration}.

In this work, we demonstrate and benchmark the execution of error-corrected quantum algorithms using only fault-tolerant gadgets on trapped-ion quantum computers based on the quantum charge-coupled device (QCCD)~\cite{pino2021demonstration} architecture, namely \texttt{H2-1}, \texttt{H2-2}~\cite{moses2023racetrack}, and \texttt{Helios}~\cite{ransford2025helios}.
To this end, we first review and benchmark, at the component level, circuit gadgets that enable universal quantum computation with the Steane code.
We introduce a fault-tolerant logical $T$ gate that outperforms the prior state of the art on hardware, achieving a fidelity of $0.960(6)$ for a Ramsey sequence with 16 $T$ gates on \texttt{H2-1}, corresponding to an infidelity of approximately $2.6(4)\times10^{-3}$ per $T$ gate.
For comparison, prior work \cite[Fig.~5, T method 2]{mayer2024benchmarking} demonstrated a non-fault-tolerant logical $T$ gate with infidelity $14\times10^{-3}$.
We additionally benchmark several methods for performing a QEC cycle, and find that a Steane-type gadget that teleports data onto a fresh code block outperforms a previously used flag-fault-tolerant protocol.

Putting these ingredients together, we execute the quantum approximate optimization algorithm (QAOA)~\cite{farhi2014quantum, blekos2024review} for small instances of the low autocorrelation binary sequences (LABS) problem~\cite{boehmer1967binary, schroeder1970synthesis} and a portfolio optimization problem instance, as well as the Harrow-Hassidim-Lloyd (HHL) algorithm~\cite{harrow2009quantum} for a structured toy problem, namely the Poisson equation on a lattice.
Notably, we observe improved QAOA performance with increasing QAOA depth and when compiling to more $T$ gates, the noisiest element of our logical instruction set.
Encoded circuits with few $T$ gates consistently perform comparably to their unencoded versions.
This near-break-even performance is made possible by both hardware improvements and the use of optimized fault-tolerant gadgets.

Moreover, we find that increasing the number of allowed attempts in repeat-until-success (RUS) subroutines has little effect on the performance of QAOA on \texttt{H2-2} and no effect on \texttt{Helios}, while reducing the post-selection discard rate to nearly zero, thereby demonstrating an essential capability for scalable fault-tolerant quantum computation. %
Our HHL benchmarks similarly demonstrate the suppression of state preparation errors with RUS, and the improvement of logical circuit fidelity with active QEC cycles.
Through careful examination of HHL data, we identify dynamic program compilation artifacts as performance bottlenecks for the algorithm---highlighting the importance of application-level benchmarking to reveal system-level behaviors that may not manifest in component benchmarks. %

The remainder of this paper is structured as follows.
We review the Steane code and circuit gadgets necessary for universal fault-tolerant quantum computation in \cref{sec:steane}, where we additionally introduce improved methods for applying a fault-tolerant $T$ gate.
We review the QAOA algorithm, the LABS problem, and binary portfolio optimization, as well as the HHL algorithm and a toy problem suitable for near-term implementation in \cref{sec:applications}.
We benchmark several Steane gadgets in isolation both in emulators and on hardware in \cref{sec:results_components}, and perform our application-level benchmarking in \cref{sec:results_applications}.
Finally, we close with a discussion in \cref{sec:discussion}.

\section{Fault-tolerant quantum computing with the Steane code}
\label{sec:steane}

We begin by reviewing the ingredients necessary for universal fault-tolerant quantum computation with the Steane code and the various options for implementing them on trapped-ion quantum processors with qubit movement.
For the purposes of this work, we say that a circuit, gadget, or protocol is fault tolerant if every mid-circuit error propagates to a correctable error at the end of the protocol.
Here mid-circuit errors include single-qubit errors after single-qubit gates, two-qubit errors after two-qubit gates, and single-qubit memory errors.
For simplicity, we do not explicitly consider the correction of leakage errors, though we note that leakage errors are converted to Pauli errors when teleporting quantum data onto fresh qubits. %
This operational definition of fault tolerance is sufficient for the Steane code, which makes no promise of correcting more than one error between any two rounds of QEC.
We refer the reader to Ref.~\cite{gottesman2009introduction} for more general and nuanced definitions and discussions of fault tolerance.

Throughout this work, we denote a single-qubit identity operator by $I = \op{0} + \op{1}$, a Pauli-$X$ operator by $X = \op{0}{1} + \op{1}{0}$, a Pauli-$Z$ operator by $Z = \op{0} - \op{1}$, and let $Y = i X Z$.
We denote the action of a single-qubit operator $O$ on physical qubit $j$ by $O_j$.
For any $n$-bit string $\bm b=(b_1, b_2, \cdots, b_n) \in \set{0,1}^n$, we let $O(\bm b) = \prod_{j:b_j=1} O_j$ denote the action of $O$ on the qubits indicated by $\bm b$.
To disambiguate operators that act on physical and logical degrees of freedom, in this section we use an overline to indicate that operators act on encoded logical qubits, as in $\overline{O}$, though we will drop this notation in \cref{sec:results} when there is no ambiguity.
Finally, we refer to the physical qubits that encode a logical qubit as the \emph{data qubits} of an error-correcting code.

\subsection{The Steane code}

\begin{figure}
  \includegraphics{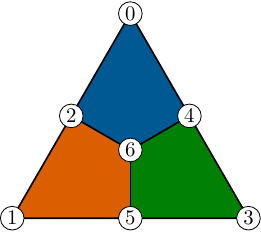}
  \caption{
    A geometric representation of the Steane code.
    Nodes represent qubits.
    Colored cells identify stabilizer group generators, which are obtained by taking the product of either Pauli-$X$ or Pauli-$Z$ operators on the adjacent nodes, such as $X_1 X_2 X_6 X_5$.
    The product of either Pauli-$X$ or Pauli-$Z$ operators on any exterior edge of the triangle yields, respectively, a logical Pauli-$X$ or Pauli-$Z$ operator, such as $Z_0 Z_4 Z_3$.
  }
  \label{fig:steane_visual}
\end{figure}

The Steane code~\cite{steane1996multipleparticle} is the smallest Calderbank–Shor–Steane (CSS) quantum error-correcting code~\cite{steane1996multipleparticle, calderbank1996good} (as well as the smallest error-correcting topological and color code~\cite{bombin2013introduction, kubica2015universal}), and has code parameters $\llbracket 7, 1, 3 \rrbracket$, meaning that it encodes one logical qubit into 7 physical qubits and has code distance 3. %
The Steane code can correct any single-qubit Pauli error, as well as two-qubit errors of mixed $X/Z$ type, such as $X_0 Z_4$.
Logical states of the Steane code are $+1$ eigenvectors of all stabilizers $s\in\mathcal{S}$, where the stabilizer group $\mathcal{S}$ consists of all products of stabilizer generators $g \in G = G_X\cup G_Z$, and the generator subgroup $G_P = \set{P(\bm b):\bm b\in\rows(h)}$ is defined in terms of the rows of the parity check matrix of the classical $[7, 4, 3]$ Hamming code,
\begin{align}
  h =
  \begin{pmatrix}
    0 & 0 & 0 & 1 & 1 & 1 & 1 \\
    0 & 1 & 1 & 0 & 0 & 1 & 1 \\
    1 & 0 & 1 & 0 & 1 & 0 & 1
  \end{pmatrix}.
  \label{eq:parity_check_matrix}
\end{align}
The logical $X$ and $Z$ operators of the Steane code are
\begin{align}
  \overline X &= \prod_{j=0}^6 X_j \simeq X_0 X_1 X_2,
  &&
  \overline Z &= \prod_{j=0}^6 Z_j \simeq Z_0 Z_1 Z_2,
\end{align}
where $\simeq$ denotes equality up to multiplication by stabilizers, and throughout this work we index the qubits of a Steane code block from 0 to 6.
\cref{fig:steane_visual} shows a geometric representation of the Steane code.

We note that there are different conventions in the literature regarding the ordering of qubits for the Steane code.
We use the ordering in (for example) Refs.~\cite{steane1996multipleparticle, goto2016minimizing, chamberland2019faulttolerant, heussen2023strategies, ibe2025measurementbased}, which differs from that in Refs.~\cite{ryan-anderson2018pecos, ryan-anderson2021realization, mayer2024benchmarking, yamamoto2025quantum}.
The map from qubit index in Refs.~\cite{ryan-anderson2018pecos, ryan-anderson2021realization, mayer2024benchmarking, yamamoto2025quantum} to that here and in Refs.~\cite{steane1996multipleparticle, goto2016minimizing, chamberland2019faulttolerant, heussen2023strategies, ibe2025measurementbased} is
$\set{4 \to 0, 5 \to 1, 6 \to 2, 3 \to 3, 2 \to 4, 1 \to 5, 0 \to 6}$.

\subsection{Clifford gates and initialization}
\label{sec:steane_clifford}

The Steane code has a transversal logical Hadamard gate, $\overline H = \prod_{j=0}^6 H_j$, and a transversal logical phase gate $\overline S = \prod_{j=0}^6 S_j^\dag$, where $H=\op{+}{0} + \op{-}{1}$ and $S=\op{0} + i\op{1}$.
These gates suffice to make all single-qubit Clifford gates transversal in the Steane code.
A logical $\overline{\mathrm{CNOT}}$ gate of the Steane code is implemented by broadcasting physical $\mathrm{CNOT} = \op{0}\otimes I + \op{1}\otimes X$ gates pairwise across the data qubits of two Steane code blocks.
That is, if logical qubits $\overline 0$ and $\overline 1$ are supported, respectively, on physical qubits $0,1,\cdots,6$ and $7,8,\cdots,13$, then $\overline{\mathrm{CNOT}}_{\overline 0,\overline 1} = \prod_{j=0}^6 \mathrm{CNOT}_{j,j+7}$.
The fact that these logical gates are transversal, meaning that every logical gate acts on at most one data qubit of a Steane code block, makes them automatically fault tolerant because there is no mechanism for a gate error on one data qubit in a Steane code block to propagate to another qubit within the same code block. %
Transversality of the logical $H$ and $\mathrm{CNOT}$ implies that the logical controlled-$Z$ gate, $\mathrm{CZ} = \op{0}\otimes I + \op{1}\otimes Z$, is also transversal. %

\begin{figure}
  \centering
  \includegraphics[scale=0.75]{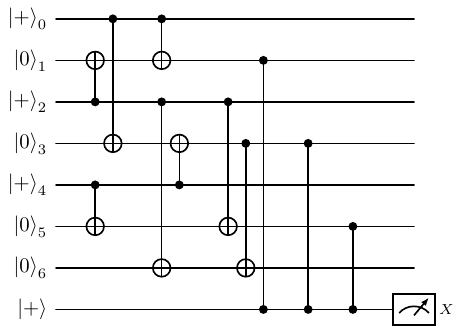}
  \caption{
    Repeat-until-success protocol to fault-tolerantly prepare a logical $\ket{\overline{0}}$ state of the Steane code using CNOT and CZ gates.
    The bottom qubit is an ancilla (``flag'') qubit that is physically measured in the $X$ basis to read out the value of the logical $\overline{Z} \simeq Z_1 Z_3 Z_5$ operator.
    This protocol ``succeeds'' if the flag measurement outcome is 0 (that is, if the flag qubit is found to be in $\ket{+}$ upon measurement in the $X$ basis); otherwise, the qubits in this circuit are reset and the protocol is repeated until success.
  }
  \label{fig:prep_z}
\end{figure}

To initialize logical states of the Steane code, we use the repeat-until-success protocol for $\ket{\overline{0}}$-state preparation in Ref.~\cite{goto2016minimizing}, shown in \cref{fig:prep_z}, which uses an ancilla qubit to measure $\overline{Z}$ and discard faulty outcomes. %
This circuit has the same two-qubit gate count as, for example, fault-tolerant circuits discovered with reinforcement learning~\cite{zen2025quantum}.
Other Pauli eigenstates of the Steane code, such as $\ket{\overline{+}}$, can be prepared by initializing $\ket{\overline{0}}$ and applying a logical Clifford gate.

\subsection{Measurement and decoding}
\label{sec:steane_measurement}

A logical $Z$-basis measurement of a Steane code block is performed by measuring all data qubits in the $Z$ basis, which yields a 7-bit string $\bm b_{\mathrm{measured}}$.
This bitstring may generally contain errors, which can be diagnosed by the \emph{error syndrome} $\bm s = h \cdot \bm b_{\mathrm{measured}}$ (mod~2).
In turn, the syndrome $\bm s$ needs to be \emph{decoded} to infer a correction $\bm b_{\mathrm{measured}}\to\bm b_{\mathrm{corrected}}$.
In the case of the Steane code, a non-zero syndrome $\bm s$ corresponds to a bit-flip error on bit $\integer(\bm s)-1$ of $\bm b_{\mathrm{measured}}$ (that is, the measurement bit associated with qubit $\integer(\bm s)-1$), where $\integer(\bm s)$ is the integer encoded by the bitstring $\bm s$, with (for example) $\integer((0,0,1))=1$ and $\integer((0,1,0))=2$.
This simple decoding procedure is a consequence of the facts that
\begin{enumerate*}[label=(\alph*)]
\item only single-qubit errors are correctable in the Steane code, and
\item the nonzero bitstring $\bm v$ is the $(\integer(\bm v)-1)$-th column of $h$ in \cref{eq:parity_check_matrix}.
\end{enumerate*}
After correcting the error inferred from $\bm s$ by flipping the corresponding bit in $\bm b_{\mathrm{measured}}$ to obtain $\bm b_{\mathrm{corrected}}$, the parity of $\bm b_{\mathrm{corrected}}$ is the logical $Z$-basis measurement outcome.
A logical $X$-basis measurement can be performed identically by measuring all data qubits in the $X$-basis, and using the same decoding procedure.

\subsection{Non-Clifford gates}
\label{sec:steane_non_cliff}

\begin{figure*}
  \subfloat[\label{fig:prep_h_old}%
    Logical $\ket{\overline{H}}$ state preparation by directly encoding a physical $\ket{H}$ state.
  ]
  {\includegraphics[scale=0.75]{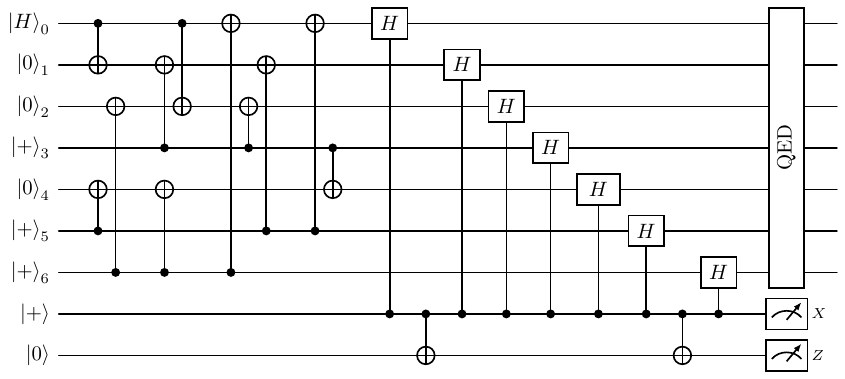}} \\
  \subfloat[\label{fig:prep_h_new}%
    Specialized $\ket{\overline{H}}$-state preparation circuit.
  ]
  {\includegraphics[scale=0.75]{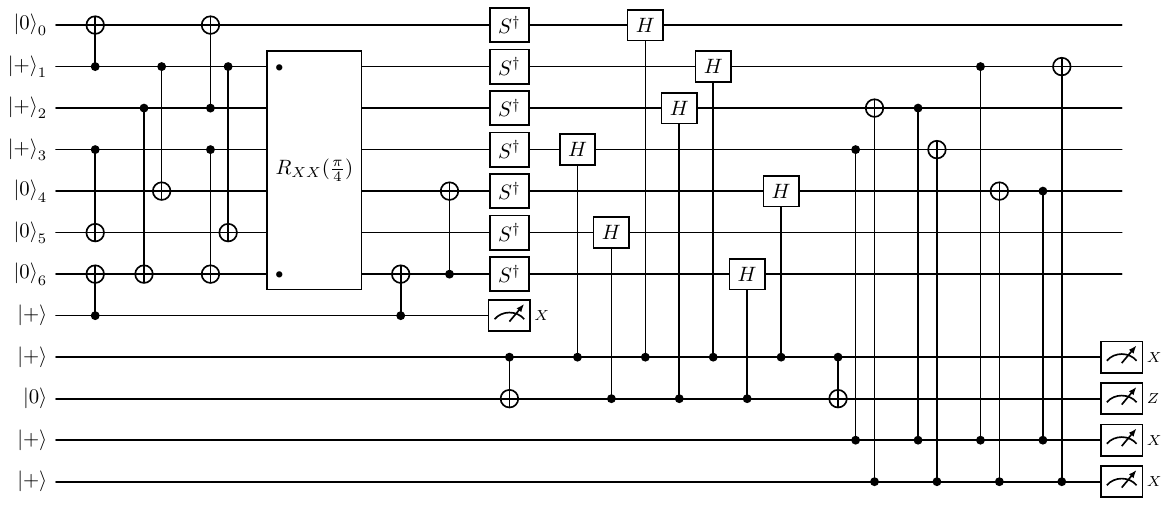}}
  \caption{
    Repeat-until-success protocols to fault-tolerantly prepare $\ket{\overline{H}}$.
    Measurements are performed in the $Z$ or $X$ basis as indicated.
    \textbf{\sref{fig:prep_h_old}} A circuit that non-fault-tolerantly encodes a single-qubit $\ket{H}$ state into $\ket{\overline{H}}$, fault-tolerantly measures $\overline{H}$, and runs a full round of QED that consists of six stabilizer measurements.
    \textbf{\sref{fig:prep_h_new}} A specialized circuit to fault-tolerantly prepare $\ket{\overline{H}}$ using a shorter non-fault-tolerant $\ket{\overline{H}}$ preparation circuit, two qubits to flag mid-circuit errors, a fault-tolerant measurement of $\overline{H}$, and two stabilizer measurements.
    Here $R_{XX}(\theta) = e^{-i\theta X\otimes X/2}$ is a two-qubit Pauli rotation on the qubits indicated by dots ($\bullet$).
    The circuit here has been expanded for clarity, but can be implemented with a two-qubit gate depth of 9.
  }
  \label{fig:prep_h}
\end{figure*}

Universal quantum computation requires non-Clifford gates.
The presence of a transversal logical Hadamard gate, $\overline{H}$, enables a simple strategy to fault-tolerantly prepare a non-Clifford state: the $+1$ eigenstate of $\overline{H}$, which we denote $\ket{\overline{H}}$.
The basic idea is to first prepare $\ket{\overline{H}}$ non-fault-tolerantly, and then perform a fault-tolerant measurement of $\overline{H}$ and some quantum error detection (QED) to project out preparation errors.
This is a repeat-until-success protocol: nonzero measurement outcomes trigger a reset and restart of the protocol.
\cref{fig:prep_h}\sref{fig:prep_h_old} shows a straightforward implementation of this protocol, which has featured in prior works~\cite{goto2016minimizing, postler2022demonstration, mayer2024benchmarking}.

We construct an improved circuit to fault-tolerantly prepare the $\ket{\overline{H}}$ state in \cref{fig:prep_h}\sref{fig:prep_h_new}.
This circuit uses the non-fault-tolerant $\ket{\overline{T}}$-state preparation circuit in Eq.~B13 of Ref.~\cite{yamamoto2025quantum}, which (similarly to the circuit in Fig.~2 of Ref.~\cite{ibe2025measurementbased}) requires only 9 two-qubit gates, as opposed to the 11 two-qubit gates required for the encoding circuit in \cref{fig:prep_h}\sref{fig:prep_h_old}.
Similar circuits can be constructed by propagating representations of the $\overline{X}$ operator (that may differ by stabilizer factors) backward through a $\ket{\overline{0}}$-state preparation circuit to find locations at which the back-propagated $\overline{X}$ is a Pauli string with weight 2.
Inserting a pair-Pauli rotation gate at such a location makes the circuit prepare the state $R_{\overline{X}}(\theta)\ket{\overline{0}} = \exp(-i\frac{\theta}{2}\overline{X}) \ket{\overline{0}} = \cos(\frac{\theta}{2}) \ket{\overline{0}} + \sin(\frac{\theta}{2}) \ket{\overline{1}}$.
In addition, the specialized circuit in \cref{fig:prep_h}\sref{fig:prep_h_new} uses one qubit to flag mid-circuit errors, and measures only two stabilizers for QED after the measurement of $\overline{H}$.
We provide a detailed discussion of \cref{fig:prep_h}\sref{fig:prep_h_new} and its construction in \cref{sec:prep_h}.

We note that Ref.~\cite[Fig.~6]{chamberland2019faulttolerant} and Ref.~\cite[Fig.~11]{heussen2023strategies} provide deterministic protocols to fault-tolerantly prepare a $\ket{\overline{H}}$ state of the Steane code.
These strategies use additional flag qubits to diagnose mid-circuit errors.
We focus on repeat-until-success strategies in this work, and refer to the maximum number of attempts allowed in a repeat-until-success subroutine before discarding the entire circuit and restarting a computation as the ``repeat-until-success (RUS) limit''.

\begin{figure}
  \centering
  \begin{tabular}{cc}
    \subfloat[\label{fig:t_gate_bare}\texttt{T-bare}]
    {\includegraphics[scale=0.75]{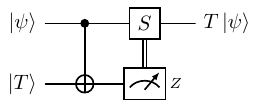}}
    &
    \subfloat[\label{fig:t_gate_swap}\texttt{T-swap}]
    {\includegraphics[scale=0.75]{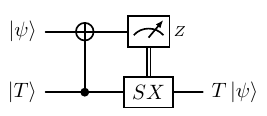}}
    \\
    \subfloat[\label{fig:qec_steane_bare}\texttt{Steane-bare}]
    {\includegraphics[scale=0.75]{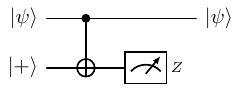}}
    &
    \subfloat[\label{fig:qec_steane_swap}\texttt{Steane-swap}]
    {\includegraphics[scale=0.75]{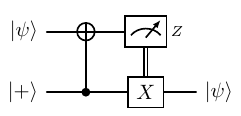}}
  \end{tabular}
  \caption{
    \sref{fig:t_gate_bare},\sref{fig:t_gate_swap}:
    Two gadgets that consume a $\ket{T} = T\ket{+}$ state to apply a $T$ gate to an arbitrary single-qubit state $\ket\psi$.
    At the level of circuit identities, these gadgets can be generalized by replacing $T\to R_Z(\theta)$ and $S\to R_Z(2\theta)$ for any angle $\theta$, but the resulting circuits may not have fault-tolerant implementations.
    \sref{fig:qec_steane_bare},\sref{fig:qec_steane_swap}:
    Analogous gadgets (with $\theta=0$) that implement a fault-tolerant error detection cycle for bit-flip ($X$) errors.
    For \sref{fig:qec_steane_bare}, detected errors can be either actively corrected or tracked in software.
    For \sref{fig:qec_steane_swap}, the conditional logical $\overline{X}$ correction can similarly be tracked in software by updating a Pauli frame.
    Gadgets \sref{fig:t_gate_swap} and \sref{fig:qec_steane_swap} have the benefit of mitigating leakage errors in $\ket\psi$ and correcting physical $X$ errors passively, as these errors do not propagate to the output state of the gadget.
    Changing bases as $Z\leftrightarrow X$, $\ket+\leftrightarrow\ket0$, and flipping the orientation of the CNOT yields gadgets for fault-tolerantly implementing a $T_X = HTH$ gate and detecting phase-flip errors.
  }
  \label{fig:t_gate_and_qec}
\end{figure}

After preparing $\ket{\overline{H}}$, this state can be rotated into the more commonly used $\ket{\overline{T}} = \overline{T} \ket{\overline{+}}$ state as $\ket{\overline{T}} = \overline{H}\overline{S}\ket{\overline{H}}$.
Here $T = \op{0} + e^{i\pi/4}\op{1}$.
The $\ket{\overline{T}}$ state can in turn be consumed by one of the gadgets in \cref{fig:t_gate_and_qec}\sref{fig:t_gate_bare} or \cref{fig:t_gate_and_qec}\sref{fig:t_gate_swap} to apply a $\overline{T}$ gate and thereby complete a Clifford+$T$ gate set for the Steane code.

\subsection{Quantum error correction}
\label{sec:qec}

Finally, we discuss strategies for performing QEC and QED in the Steane code.
In principle, to perform QEC one needs to measure a complete set of stabilizers that generate the code's stabilizer group.
The syndrome (bitstring) obtained from these stabilizer measurements can then be decoded to infer a data qubit error.
In principle, each stabilizer can be measured using a gadget similar to that in \cref{fig:pauli_measurement_gadget}.
However, some care is needed to make sure that stabilizer measurement gadgets are themselves fault tolerant.
Moreover, redundant syndrome measurements are required to diagnose measurement errors.

\begin{figure}
  \includegraphics[scale=0.75]{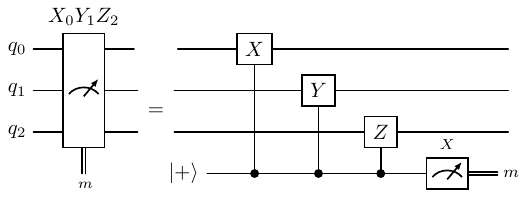}
  \caption{
    Non-fault-tolerant gadget to measure the Pauli string $X_0 Y_1 Z_2$, writing the result to the bit $m$.
    Single-qubit errors in the middle of the decomposition on the right can propagate to higher-weight errors on qubits $q_0,q_1,q_2$.
  }
  \label{fig:pauli_measurement_gadget}
\end{figure}

The paradigm of \emph{flag fault tolerance}~\cite{chamberland2018flag, chao2020flag} provides a general scheme for fault-tolerant stabilizer measurements.
This paradigm typically makes careful use of additional flag qubits whose measurement outcomes provide sufficient information to diagnose mid-circuit errors.
In the case of the Steane code, Ref.~\cite{reichardt2020faulttolerant} developed a specialized flag-fault-tolerant syndrome measurement protocol that uses some syndrome measurement qubits as flags for others, altogether requiring only six ancilla qubits to measure six stabilizers (and requiring only three ancilla qubits at a time).
If an initial round of flagged stabilizer measurements yields a nontrivial syndrome, this protocol triggers a second round of stabilizer measurements, and the combined syndromes from the two measurement rounds suffice to diagnose and correct individual mid-circuit errors in a fault-tolerant manner.
This protocol was experimentally implemented in Ref.~\cite{ryan-anderson2021realization}.
We review and provide additional details about the flag-fault-tolerant QEC protocol used in this work, including explicit circuits, in \cref{sec:flag_qec}.

An alternative general strategy for fault-tolerant QEC is Steane error correction (note the naming collision: Steane error correction need not be performed with the Steane code).
Steane QEC can be thought of as ``copying'' the physical errors in one code block onto an ancilla code block---crucially, without affecting the logical state of either code block---and then measuring out the ancilla code block to diagnose the errors.
This strategy is shown in \cref{fig:t_gate_and_qec}\sref{fig:qec_steane_bare}, and can be recovered from a generalization of the $T$ gate teleportation gadget in \cref{fig:t_gate_and_qec}\sref{fig:t_gate_bare}.
Physical $X$ errors on the top (logical) qubit propagate through the transversal CNOT to physical $X$ errors on the bottom (logical) qubit, which in turn flip $Z$-basis measurements of the ancilla, thereby revealing the location of the $X$ errors.
However, the CNOT does not affect the logical states of these qubits: $\mathrm{CNOT}(\ket\psi\otimes\ket+) = \ket\psi\otimes\ket+$.
Preparing the ancilla qubit in $\ket{\overline{0}}$, flipping the orientation of the CNOT, and measuring the ancilla qubit in the $X$ basis similarly diagnoses physical phase-flip ($Z$) errors.
Diagnosed errors can be either actively corrected or virtually tracked in software.
We refer to the execution of \cref{fig:t_gate_and_qec}\sref{fig:qec_steane_bare} followed by its $Z$-correction variant as the \texttt{Steane-bare} strategy for QEC in this work.

A teleportation-like variant of Steane QEC, shown in \cref{fig:t_gate_and_qec}\sref{fig:qec_steane_swap}, can similarly be recovered from a generalization of the $T$ gate teleportation gadget in \cref{fig:t_gate_and_qec}\sref{fig:t_gate_swap}.
This strategy corrects errors \emph{passively}: physical $X$ errors on the top (logical) qubit do not propagate to the bottom (logical) qubit, and instead are ``decoded away'' in the logical $Z$-basis measurement of the top qubit.
As with ordinary Steane QEC, appropriately swapping $X$ and $Z$ bases in \cref{fig:t_gate_and_qec}\sref{fig:qec_steane_swap} yields a gadget for correcting phase-flip ($Z$) errors.
We refer to the execution of \cref{fig:t_gate_and_qec}\sref{fig:qec_steane_swap} followed by its $Z$-correction variant as the \texttt{Steane-swap} strategy for QEC in this work.

\begin{figure}
  \includegraphics[scale=0.75]{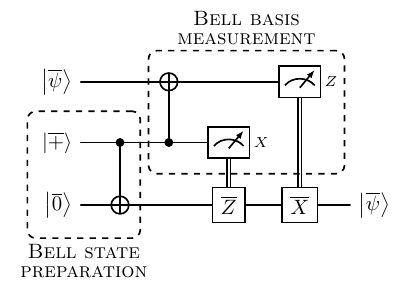}
  \caption{Standard teleportation circuit for Knill QEC.}
  \label{fig:qec_knill}
\end{figure}

Finally, QEC can also be performed with full quantum state teleportation, which is commonly referred to as Knill QEC.
A standard circuit for this strategy is shown in \cref{fig:qec_knill}.
A minor modification of this strategy, in which the second logical CNOT is performed before the first, reduces Knill QEC to a parallelized version of (two complementary rounds of) \texttt{Steane-swap} QEC.

\section{Application benchmarks}
\label{sec:applications}

Here we describe the algorithms that we use for application-level benchmarking of the Steane code on the \texttt{H2-1}, \texttt{H2-2}~\cite{moses2023racetrack}, and \texttt{Helios}~\cite{ransford2025helios} trapped-ion quantum computers.

\subsection{QAOA}
\label{sec:qaoa_background}

We first consider the quantum approximate optimization algorithm (QAOA)~\cite{farhi2014quantum, blekos2024review}, which takes as input
\begin{enumerate*}[label=(\alph*)]
\item a \emph{cost Hamiltonian} $H_C$ that encodes an $N$-bit optimization problem, and
\item two real-valued vectors $\bm\beta,\bm\gamma\in\mathbb{R}^p$,
\end{enumerate*}
and prepares the state $\ket{\bm\beta,\bm\gamma} = U_{\mathrm{QAOA}}(\bm\beta,\bm\gamma) \ket{+}^{\otimes N}$, where $p$ is referred to as the \emph{QAOA depth} and the unitary
\begin{align}\label{eq:standard_qaoa_ansatz}
  U_{\mathrm{QAOA}}(\bm\beta,\bm\gamma)
  = \prod_{j=1}^p e^{-i\beta_j H_M} e^{-i\gamma_j H_C}
\end{align}
additionally depends on a choice of \emph{mixing Hamiltonian} $H_M$ that is typically taken to be a transverse field, $H_M = \frac12 \sum_j X_j$.
The output of QAOA is a list of bitstrings obtained by repeatedly preparing and measuring the state $\ket{\bm\beta,\bm\gamma}$ in the computational basis.
The \emph{success probability} of a QAOA state $\ket{\bm\beta,\bm\gamma}$ is the probability of measuring the optimal solution to the optimization problem, or the ground state of $H_C$.

The cost Hamiltonian $H_C$ for a cost function $C:\set{0,1}^N\to\mathbb{R}$ is typically constructed by energetically penalizing the bitstrings according to their cost, for example with $H_C = \sum_{\bm x\in\set{0,1}^N} C(\bm x)\op{\bm x}$.
The QAOA circuit is equivalent, up to a choice of $\bm\beta, \bm\gamma$, to the Trotterized evolution of an Ising model with a transverse field.

Extensive theoretical and numerical evidence suggests that QAOA can provide a polynomial~\cite{boulebnane2024solving, shaydulin2024evidence, boulebnane2025evidence, apte2026quantum} and, in some restricted cases, exponential~\cite{farhi2025lower, montanaro2025quantum} speedup over state-of-the-art classical algorithms.
Though QAOA can be viewed as a variational algorithm in which the parameters $\bm\beta,\bm\gamma$ are optimized inside a classical control loop, for some problem classes it is possible to obtain fixed (instance-independent) parameters~\cite{shaydulin2024evidence, sureshbabu2024parameter, boulebnane2024solving, farhi2022quantum, wurtz2021fixedangle, farhi2025lower, he2025parameter, he2025nonvariational} that nonetheless yield an advantage over classical state-of-the-art methods, as is the case for the problems considered in this work~\cite{shaydulin2024evidence, sureshbabu2024parameter}.
Consequently, in this work we compile QAOA with fixed parameters $\bm\beta,\bm\gamma$.

\subsubsection{LABS}
\label{sec:labs}

The first problem that we tackle with QAOA is the low-autocorrelation binary sequences (LABS) problem~\cite{boehmer1967binary, schroeder1970synthesis}, which has applications in communications engineering~\cite{schroeder1970synthesis, golay1977sieves}, and for which there exists evidence of a scaling advantage for QAOA over state-of-the-art classical methods~\cite{shaydulin2024evidence}.
The objective of LABS is to find a binary sequence $\bm s = (s_1, s_2, \cdots, s_N) \in\set{+1,-1}^N$ that minimizes the autocorrelation energy
\begin{align}
  E_{\mathrm{LABS}}(\bm s) = \sum_{k=1}^{N-1} A_k(\bm s)^2,
  &&
  A_k(\bm s) = \sum_{j=1}^{N-k} s_j s_{j+k}.
  \label{eq:E_LABS}
\end{align}
The QAOA cost Hamiltonian $H_{\mathrm{LABS}}$ for LABS is obtained by promoting each variable $s_j$ in \cref{eq:E_LABS} to a single-qubit Pauli-$Z$ operator, $Z_j$.
The LABS problem becomes classically intractable to solve exactly at $\approx 100$ binary variables, making it a promising candidate for demonstrating quantum advantage in optimization~\cite{zhang2025new, boskovic2017lowautocorrelation, boskovic2016github, packebusch2016low}.
Ref.~\cite{shaydulin2024evidence} provides optimized QAOA parameters for LABS for small $N$, and instance-independent parameters for large $N$, which we use in this work.

\subsubsection{Portfolio optimization}

The second problem we tackle with QAOA is portfolio optimization.
We consider the simplest version of this problem, namely, binary unconstrained mean-variance portfolio optimization, which takes the form of a quadratic unconstrained binary optimization problem. %
The goal is to find a binary vector $\bm s = (s_1, s_2, \cdots, s_N) \in \{0,1\}^N$ that indicates whether a given asset is included or excluded from a portfolio to minimize the objective function
\begin{align}
  C(\bm s) = q \sum_{ij} W_{ij} s_i s_j - \sum_i \mu_i s_i,
  \label{eq:portfolio_opt}
\end{align}
where $\bm\mu \in \mathbb{R}^{N}$ denotes the vector of expected returns for the assets, $W \in \mathbb{R}^{N \times N}$ is the covariance matrix between the $N$ assets, and $q > 0$ is a risk factor that balances risk and return in the objective function.
The corresponding QAOA cost Hamiltonian $H_C$ is obtained by promoting each variable $s_j$ in \cref{eq:portfolio_opt} to the single-qubit projector $\op{1}_j = \frac12 (1 - Z_j)$.
Letting $C_{\text{min}} = \min_{\bm s} C(\bm s)$ and $C_{\text{max}} = \max_{\bm s} C(\bm s)$ denote the extrema of $C$, the approximation ratio achieved by a QAOA state $\ket{\bm\beta,\bm\gamma}$ is
\begin{align}
  \mathrm{AR}(\bm\beta,\bm\gamma)
  = \frac{\braket{\bm\beta,\bm\gamma|H_C|\bm\beta,\bm\gamma} - C_{\min}}{C_{\max} - C_{\min}}.
\end{align}
Portfolio optimization problem instances are defined by a choice of expected returns $\bm\mu$ and covariance matrix $W$ for a basket of assets, which we generate from historical equity price data using \texttt{qokit}~\cite{2025qokit}.

\subsection{HHL}
\label{sec:hhl}

The Harrow-Hassidim-Lloyd (HHL) algorithm~\cite{harrow2009quantum} was proposed for solving a quantum analog of a traditional linear system of equations, and can be stated as follows.
Given an error tolerance $\epsilon > 0$ and appropriate access to a vector $\bm b \in \mathbb{C}^M$ and matrix $A \in \mathbb{C}^{M\times N}$ for which $M \leq N$ and the operator norm $\norm{A} \leq 1$, prepare a quantum state $\ket{\bm y} \in \mathbb{C}^N$ that is $\epsilon$-close to a solution $\bm x_\star$ of $A\bm x = \bm b$, meaning $\norm{\ket{\bm y} - \ket{\bm x_\star}} < \epsilon$ for the normalized state $\ket{\bm x_\star}$ that is proportional to $\bm x_\star$ (where $\norm{\ket\psi} = \sqrt{\braket{\psi|\psi}}$).
This task is commonly referred to as the quantum linear systems problem (QLSP).

Various algorithmic improvements have extended the original approach of HHL~\cite{childs2017quantum, chakraborty2019power, an2022quantum}, eventually achieving a runtime that is asymptotically optimal in the number of queries to $A$ and $\bm b$, namely $\Theta(\kappa\log(1/\epsilon))$, where $1/\kappa$ is the smallest nonzero singular value of $A$~\cite{costa2022optimal}.
Altogether, algorithms for the QLSP with sparse matrices (that is, matrices whose maximum number of nonzero entries per row and column is independent of $M$ and $N$) have a runtime that is $\mathcal{O}(\kappa\log(1/\epsilon)\polylog(M,N))$.
Given the prevalence of matrix inversion as a subroutine in classical algorithms and the $\poly(M,N) \rightarrow \polylog(M,N)$ improvement for QLSP, it is no surprise that QLSP solvers have been used as components in proposed quantum algorithms targeting various problems, including differential equations~\cite{montanaro2016quantum, liu2021efficient, linden2022quantum}.

While the potential applications of QLSP solvers may seem vast at face value, there are many technical barriers to achieving end-to-end speedups for classical problems.
For example, QLSP solvers require efficient access to entries of $A$ and $\bm b$, and only provide an exponential speedup when the condition number $\kappa$ is $\mathcal{O}(\polylog(N,M))$.
Moreover, the output of a QLSP solver is a state $\ket{\bm y}$ that is stored in quantum memory. %
Classical state readout, or the inference of the full classical vector $\bm y$ from the quantum state $\ket{\bm y}$, wipes out any quantum speedup of a QLSP solver.
For this reason, a QLSP solver should not be thought of as a direct solver for a classical linear system of equations, but rather as a routine to prepare the quantum state $\ket{\bm y}$.
The state $\ket{\bm y}$ may in turn be used to measure specific observables of interest, or as an input to other quantum subroutines; see Refs.~\cite{kerenidis2016quantum, liu2021efficient, babbush2023exponential, bravyi2025quantum} for examples of classical problems that benefit from this use of a QLSP.

\begin{figure*}
  \subfloat[\label{fig:circuit_hhl_sketch}%
    Sketch of the HHL algorithm for an $N\times N$ matrix $A$.
    Here $n=\lceil\log_2 N\rceil$, and the choice of $m$ limits the achievable precision of eigenvalue estimation and inversion.
    The last gate on the bottom qubit indicates post-selection on a $0$ measurement outcome.
  ]
  {
    \begin{minipage}{\textwidth}
      \centering
      \includegraphics[scale=0.75]{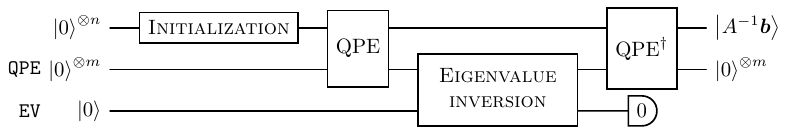}
    \end{minipage}
  } \\
  \subfloat[\label{fig:hhl_poisson}%
    Small-scale implementation of the HHL algorithm for the Poisson equation, with $m=n=2$.
  ]
  {\includegraphics[scale=0.75]{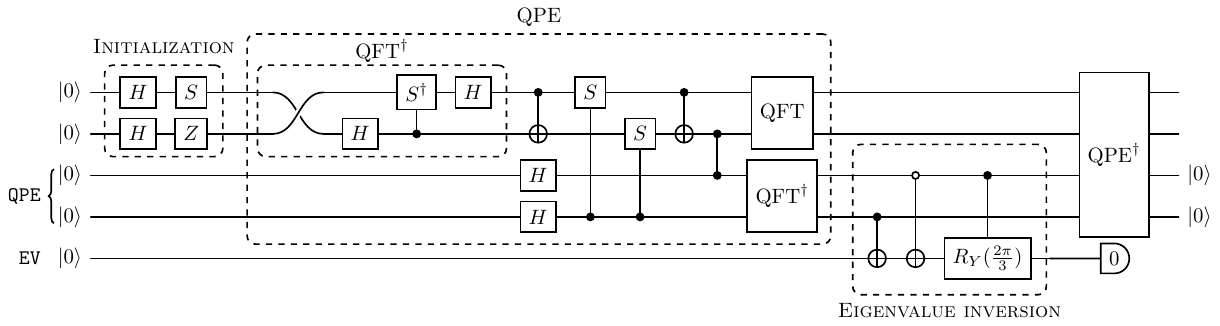}}
  \caption{
    Logical circuits for the HHL algorithm.
    The HHL algorithm includes an explicit measurement and post-selection on a $\ket{0}$ state of the ``eigenvalue'' (\texttt{EV}) qubit.
    In the absence of errors, the remaining (\texttt{QPE}) ancillas are guaranteed to end in $\ket{0}$ states, which can be leveraged to perform logical error detection.
  }
  \label{fig:circuit_hhl}
\end{figure*}

We consider the original HHL algorithm in this work, and restrict to the case of Hermitian matrices $A$.
Though asymptotically suboptimal, the original HHL algorithm
\begin{enumerate*}[label=(\alph*)]
\item still uses important quantum subroutines like the quantum Fourier transform and quantum phase estimation (QPE)~\cite{kitaev1995quantum}, and
\item can be distilled into a small enough circuit for execution on current quantum hardware, unlike more sophisticated QLSP solvers.
\end{enumerate*}

At a high level, the HHL algorithm is as follows (see also \cref{fig:circuit_hhl}\sref{fig:circuit_hhl_sketch} for a sketch).
\begin{enumerate}
  \item\label{step:initialization} \textbf{Initialization}:
    Prepare the state $\ket{\bm b} = \bm{b} / \sqrt{\bm{b}^\dag \bm{b}}$.
  \item\label{step:QPE} \textbf{Eigenvalue estimation}:
    Denoting the eigenvalues of $A$ by $\lambda_j$, the corresponding eigenvectors by $\ket{e_j}$, and defining $b_j = \braket{e_j|\bm b}$, implement the transformation
    \begin{align}
      \ket{\bm b} = \sum_j b_j \ket{e_j} \to \sum_j b_j \ket{e_j} \otimes \ket{\lambda_j},
    \end{align}
    which can be done using QPE~\cite{kitaev1995quantum}.
  \item\label{step:inversion} \textbf{Eigenvalue inversion}:
    Letting $\tilde\lambda_j^{-1} = C / \lambda_j$ with some constant $C$ for which all $\tilde\lambda_j^{-1} \le 1$ (and defining $\tilde\lambda_j^{-1} = 0$ whenever $\lambda_j=0$), coherently implement the transformation
    \begin{align}
      \ket{\lambda_j}
      \to \ket{\lambda_j} \otimes
      \sp{\tilde\lambda_j^{-1} \ket{0} + \sqrt{1 - \tilde\lambda_j^{-2}} \ket{1}}.
    \end{align}
  \item\label{step:QPE_inv} \textbf{Eigenvalue uncomputation}:
    Run the inverse of step \ref{step:QPE} to uncompute the eigenvalue register, arriving at the state
    \begin{align}
      \sum_j b_j \ket{e_j} \otimes
      \sp{\tilde\lambda_j^{-1} \ket{0} + \sqrt{1 - \tilde\lambda_j^{-2}} \ket{1}}.
    \end{align}
  \item \textbf{Post-selection}:
    Measure the last qubit register and post-select on a measurement outcome of $0$, thereby preparing the state
    \begin{align}
      \ket{A^{-1}\bm b} \propto \sum_j \frac{b_j}{\lambda_j} \ket{e_j},
    \end{align}
    which solves the QLSP induced by $(A, \bm b)$.
\end{enumerate}
The details of step \ref{step:initialization} depend on the access model for $\bm b$.
Steps \ref{step:QPE} and \ref{step:QPE_inv} are implemented by well-studied QPE circuits~\cite[Chapter 5]{nielsen2010quantum} (see also \cref{fig:qpe_full} of \cref{sec:hhl_details}).
Step \ref{step:inversion} typically involves quantum arithmetic to map $\ket{\tilde\lambda^{-1}} \to \ket\eta = \ket{\arcsin\tilde\lambda^{-1}}$~\cite[Section 6]{haner2018optimizing}, and then coherently apply (for example) a rotation $R_Y(2\eta)$ to an ancilla qubit that is initially in $\ket{1}$. %
More generally, every step above can be relaxed to implement a sufficiently close approximation of the desired transformation.

In full generality, HHL relies on quantum arithmetic and Hamiltonian simulation subroutines that may require more advanced quantum computers than are currently available.
As a benchmark, we therefore consider using HHL to solve a structured toy problem that is amenable to near-term implementation.
Specifically, we consider using HHL to solve the QLSP for the Poisson equation $\nabla^2 \bm x = \bm b$ on a translationally invariant one-dimensional lattice of $L=4$ points with periodic boundary conditions, for which
\begin{align}
  \nabla^2 = S_+ + S_- - 2\,\mathds{1},
\end{align}
where $S_\pm = \sum_{j=0}^{L-1} \op{j\pm1\,\mathrm{mod}\,L}{j}$ is a shift matrix and $\mathds{1}$ is the identity matrix.
We construct a circuit implementing HHL for this QLSP with $\bm b = (1, -1, i, -i)$ in \cref{sec:hhl_details}, altogether arriving at the circuit in \cref{fig:circuit_hhl}\sref{fig:hhl_poisson}.

\subsection{Logical circuit compilation}
\label{sec:compilation}

QAOA circuits rely on continuously parameterized rotation gates, which cannot be implemented transversally due to the Eastin-Knill theorem~\cite{eastin2009restrictions}.
We therefore compile QAOA circuits to a Clifford+$T$ gate set by
\begin{enumerate*}[label=(\alph*)]
\item mapping all rotations of the form $e^{-i\theta Z^{\otimes k}/2}$ to single-qubit $R_Z(\theta) = e^{-i\theta Z/2}$ gates with CNOT ladders, and
\item using \texttt{trasyn}~\cite{hao2025reducing} to decompose every $R_Z(\theta)$ gate into a sequence of Clifford+$T$ gates approximating it.
\end{enumerate*}
\texttt{trasyn} is a compiler that, given a Clifford+$R_Z$ circuit and a $T$ gate infidelity, returns a Clifford+$T$ circuit that tries to balance both synthesis error (that is, the fidelity of the decomposed circuit with respect to the original circuit, in the absence of noise) and gate error (determined by the number of $T$ gates).
In principle, a detailed and precise noise model of quantum hardware may be used to find an optimal decomposition of a Clifford+$R_Z$ circuit into Clifford+$T$.
In practice, \texttt{trasyn} uses a Pauli noise model on the logical level, which does not account for how physical errors (including coherent and idling errors particular to the QCCD architecture) affect logical performance.
We therefore tune the $T$ gate infidelity provided to \texttt{trasyn} in order to arrive at circuits with the $T$ gate counts in \cref{sec:results}.
An interesting direction for future work is to investigate noise-aware logical circuit compilation that is faithful to the noise profile of a QCCD architecture.

In the case of the HHL circuit in \cref{fig:circuit_hhl}\sref{fig:hhl_poisson}, each controlled-$S$ gate can be decomposed exactly into Clifford+$T$ using three $T$ gates, while the controlled-$R_Y(\frac{2\pi}{3})$ can be approximated to a channel fidelity of $\sim0.966$ using only two $T$ gates (see \cref{sec:hhl_details}).

After an initial decomposition into a Clifford+$T$ gate set, we further compile all logical benchmarking circuits by cycling through rounds of:
\begin{enumerate}
  \item merging single-qubit gates (commuting them past two-qubit gates whenever possible),
  \item the \texttt{GreedyPauliSimp} compilation pass in \texttt{pytket}~\cite{sivarajah2020t|ket}, which implements the Pauli Frame Graph and Clifford circuit synthesis methods from Refs.~\cite{paykin2023pcoast, schmitz2024graph}, and
  \item the \texttt{SimplifyInitial} and \texttt{SimplifyMeasured} compilation passes in \texttt{pytket} to simplify gates at the beginning and end of the circuit, which act on known initial states or precede measurements.
\end{enumerate}
This cycle of compilation passes is repeated until either the circuit reaches a fixed point or ten compilation cycles are applied, at which point the representation of the circuit throughout compilation that had the fewest $T$ gates (which is typically---but not always---the final circuit) is taken to be the compiled circuit.
As a final optimization, we identify cases in which a $\ket{0}$ state is acted on by single-qubit Clifford gates followed by a $T$ gate, and replace such cases by an initial $\ket{T}$ state followed by an appropriate sequence of Clifford gates, thereby eliminating a $T$ gate injection.
The compiled logical HHL circuit that we execute on \texttt{Helios} is provided in \cref{fig:hhl_compiled} of \cref{sec:hhl_details}.

\section{Results}
\label{sec:results}

We now benchmark components used for fault-tolerant quantum computing with the Steane code, before using the best methods identified in our component benchmarks for application-level benchmarking of small-scale quantum algorithms.
For brevity, from here onward we drop the overline notation for logical operators, $\overline{O}\to O$, as our discussions will concern only logical degrees of freedom unless explicitly stated otherwise.

\subsection{Component benchmarking}
\label{sec:results_components}

\subsubsection{$\ket{H}$-state preparation}

\begin{figure}
  \includegraphics{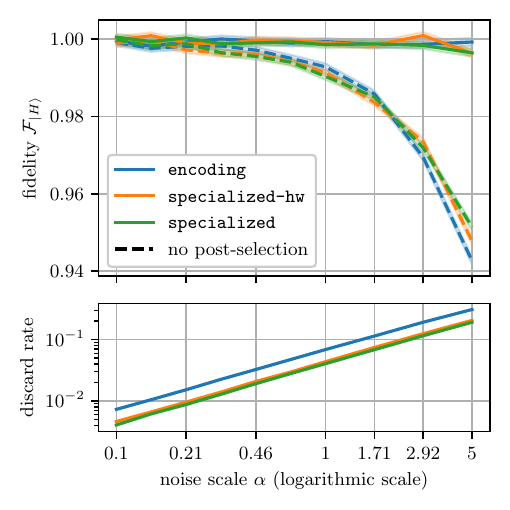}
  \caption{
    State fidelity and discard rate (i.e., the probability of a reset and restart) for different $\ket{H}$-state preparation protocols as a function of a multiplicative factor $\alpha$ by which all error rates are scaled up or down in the \texttt{H2-1E} emulator with the \texttt{H2-1} noise model~\cite{moses2023racetrack, 2025system}.
    Here \texttt{encoding} and \texttt{specialized} correspond to the protocols in \cref{fig:prep_h}, while \texttt{specialized-hw} corresponds to the circuit in \cref{fig:prep_h_hw} of \cref{sec:prep_h}, which is used in all hardware experiments in this work.
    As a point of reference, $\alpha=1$ corresponds to a probability $p = 1.05\times 10^{-3}$ of an error after a two-qubit gate.
    Fidelity is computed using \cref{eq:fidelity_h}, for which each expectation value is obtained from 78,000 shots.
    Shaded regions indicate standard errors on the mean.
    After post-selection, all $\ket{H}$-state preparation methods have logical fidelities that are statistically indistinguishable from 1, but the specialized circuits have roughly half the discard rate.
  }
  \label{fig:h_state_data}
\end{figure}

We begin by benchmarking the two repeat-until-success protocols in \cref{fig:prep_h} for preparing an $\ket{H}$ state.
We use a round of flag-fault-tolerant syndrome measurement for QED in the method of \cref{fig:prep_h}\sref{fig:prep_h_old}.
We then measure logical $Z$ and $X$ operators, whose expectation values determine the fidelity
\begin{align}
  \mathcal{F}_{\ket{H}}
  = \frac12\sp{1 + \frac{\braket{Z} + \braket{X}}{\sqrt{2}}}.
  \label{eq:fidelity_h}
\end{align}
Each expectation value is computed using the \texttt{H2-1E} emulator with the \texttt{H2-1} noise model~\cite{moses2023racetrack} (see also Ref.~\cite[Appendix D]{yamamoto2025quantum} and Ref.~\cite{2025system} for details of the noise model).
Note that the \texttt{H2-1} compiler that is used by the \texttt{H2-1E} emulator and \texttt{H2-1} hardware adds dynamical decoupling pulses to mitigate coherent (quadratic) memory errors.

\cref{fig:h_state_data} shows, for different protocols to prepare a logical $\ket{H}$ state, the fidelity $\mathcal{F}_{\ket{H}}$ and discard rate (that is, the probability of a reset and restart due to a nonzero measurement outcome) as a function of a multiplicative factor $\alpha$ by which all error rates of the \texttt{H2-1} noise model are scaled up or down on the \texttt{H2-1E} emulator.
Overall, we find that fidelities of all protocols considered in \cref{fig:h_state_data} are statistically indistinguishable at the sampled 78,000 shots per expectation value.
However, the specialized $\ket{H}$-state circuits (\cref{fig:prep_h}\sref{fig:prep_h_new} and \cref{fig:prep_h_hw}) have roughly half the discard rate of the direct encoding method that was used in previous work~\cite{goto2016minimizing, postler2022demonstration, mayer2024benchmarking} (\cref{fig:prep_h}\sref{fig:prep_h_old}).
We note that this data does not capture additional benefits of the specialized circuits: the reduction in circuit runtime and the reduction of memory error on idling qubits.
The two-qubit gate depth of the specialized circuits is 9,
whereas the encoding circuit (with QED via direct measurement of all stabilizers) has a two-qubit gate depth of 25.
Since the runtime of a circuit is approximately linear in two-qubit gate depth, specialized circuits reduce the runtime of a magic-state subroutine, potentially reducing both idling errors for other qubits and the overall runtime of the quantum computation, which is typically bottlenecked by the runtime of such subroutines.

\subsubsection{$T$ gates}

\begin{figure}
  \centering
  \includegraphics{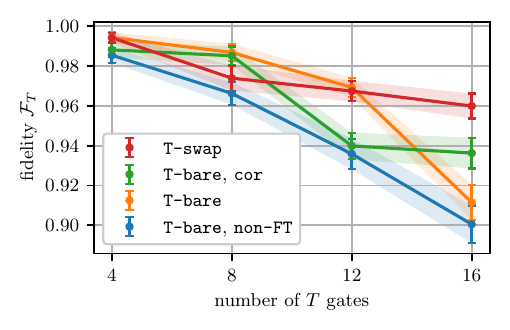}
  \caption{
    Ramsey fidelities obtained from \texttt{H2-1} using $T$ gates that are applied with the gadgets in \cref{fig:t_gate_and_qec}\sref{fig:t_gate_bare} (\texttt{T-bare}) and \cref{fig:t_gate_and_qec}\sref{fig:t_gate_swap} (\texttt{T-swap}).
    In the case of \texttt{non-FT}, logical $\ket{T}$ states are prepared non-fault-tolerantly using the encoding subcircuit of \cref{fig:prep_h}\sref{fig:prep_h_old}, without a logical $H$ measurement or QED.
    In all other cases, logical $\ket{T}$ states are prepared fault tolerantly using the circuit in \cref{fig:prep_h_hw} of \cref{sec:prep_h} (an older variant of \cref{fig:prep_h}\sref{fig:prep_h_new}).
    Experiments labeled \texttt{cor} apply active bit-flip corrections after every logical $T$ gate based on a measured syndrome.
    Every data point is obtained from 1000 shots.
    Error bars indicate $68\%$ Wilson confidence intervals.
    Interpolating lines and shaded regions are provided as a guide to the eye.
    Fault-tolerant gadgets outperform the non-fault-tolerant gadget used in prior work, and the \texttt{T-swap} generally performs best.
  }
  \label{fig:t_gate_data}
\end{figure}

\begin{figure}
  \centering
  \includegraphics{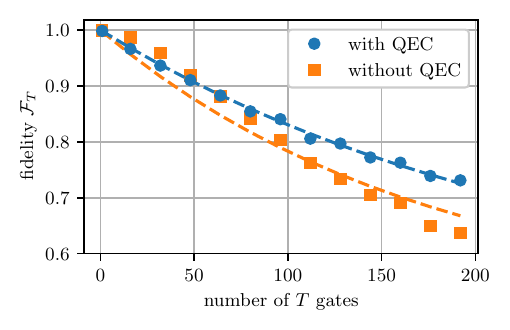}
  \caption{
    Simulated Ramsey fidelities as a function of the number of $T$ gates with and without a QEC cycle after every $T$ gate.
    Dashed lines show a fit to the prediction of a single-parameter noise model that predicts an exponential decay as $\mathcal{F}_T(n_T) = \frac12 + \frac12(1-p)^{n_T}$ (see \cref{sec:ramsey}).
    Simulations were performed using the \texttt{Selene} simulator of \texttt{Guppy} programs with two-qubit depolarizing gate rates of $2\times 10^{-3}$, measurement error rates of $10^{-3}$, and an RUS limit of 3.
    Adding QEC cycles recovers exponential decay of fidelity, but at the cost of decreased Ramsey fidelity for $\lesssim64$ $T$ gates, beyond which QEC improves the Ramsey fidelity.
  }
  \label{fig:t_gate_sim}
\end{figure}

We now benchmark variants of the Steane-encoded $T$ gate on \texttt{H2-1}.
To this end, we run a Ramsey protocol wherein we prepare a logical $\ket{+}$ state, apply $n_T$ sequential $T$ gates (where $n_T$ is a multiple of four), and perform a logical $X$-basis measurement. %
The Ramsey fidelity $\mathcal{F}_T$ of the prepared state with respect to the target state is then the fraction of times that the state is found to be $T^{n_T}\ket{+}\in\set{\ket{+},\ket{-}}$ upon measurement.
Following the procedure proposed in Ref.~\cite{piveteau2021error} (and used in Ref.~\cite{mayer2024benchmarking}), our $T$-gate benchmarks twirl $\ket{T}$ states by appending $SX$ to every $\ket{T}$-state preparation gadget with probability $1/2$ (determined online with a pseudo-random number generator).

\cref{fig:t_gate_data} shows Ramsey fidelities obtained on \texttt{H2-1} using logical $T$ gates that are applied with both \texttt{T-bare} and \texttt{T-swap} gadgets in \cref{fig:t_gate_and_qec}, for which logical $\ket{T} = HS\ket{H}$ states are prepared using the specialized circuit for $\ket{H}$ in \cref{fig:prep_h_hw} of \cref{sec:prep_h} (an older version of the circuit in \cref{fig:prep_h}\sref{fig:prep_h_new}).
For reference, we also plot Ramsey fidelities obtained in Ref.~\cite{mayer2024benchmarking} by preparing $\ket{T}$ states non-fault-tolerantly via the encoding subcircuit of \cref{fig:prep_h}\sref{fig:prep_h_old}, without a logical measurement of $H$ or QED, as well as fidelities that we obtain identically on \texttt{H2-2}.

Due to the limited support for conditional ion transport on \texttt{H2-1}, the results in \cref{fig:t_gate_data} are post-selected on the successful preparation of logical $\ket{H}$ states, for which all ancilla measurement outcomes are $0$.
This post-selection amounts to neglecting idling errors that would otherwise accumulate whenever the $\ket{H}$-state preparation protocol fails and restarts.
We allow for the conditional reset and restart of $\ket{H}$-state preparation protocols in our application-level benchmarking on \texttt{Helios} in \cref{sec:results}.

The contribution of $T$ gate errors to the Ramsey fidelity $\mathcal{F}_T(n_T)$ grows with the sequence length $n_T$, while the contribution of SPAM error is independent of $n_T$.
Long Ramsey sequences can therefore be used to estimate the infidelity of a single $T$ gate as $1 - \mathcal{F}_T(n_T)^{1/n_T}$.
The Ramsey fidelity of $0.960(6)$ for \texttt{T-swap} at 16 $T$ gates thereby corresponds to a $T$ gate infidelity of $2.6(4)\times10^{-3}$.
For comparison, an identical calculation from prior work \cite[Fig.~5, T method 2]{mayer2024benchmarking} yields a logical $T$ gate infidelity of $14\times10^{-3}$.

The $T$ gate infidelity can also be computed by fitting experimental data on $\mathcal{F}_T(n_T)$ to a single-parameter logical dephasing model, \texttt{DepModel} (see \cref{sec:ramsey} for details), which yields a $T$ gate infidelity of $2.7(4)\times10^{-3}$ for the \texttt{T-swap} gate.
However, \texttt{DepModel} predicts an exponential decay of the Ramsey fidelity with respect to the number of $T$ gates, which is a poor fit to the experimental data in \cref{fig:t_gate_data}.
The observed non-exponential behavior reflects a violation of the assumption in \texttt{DepModel} that logical $T$ gates cause independent and identically distributed (IID) logical errors (though we note that other effects, such as coherent logical errors and non-Markovian noise \cite{ceasura2022nonexponential}, may also contribute to a non-exponential decay of fidelity).
In contrast, the Ramsey experiments in \cref{fig:t_gate_data} and Ref.~\cite{mayer2024benchmarking} allow data-qubit (physical) $Z$ errors to accumulate throughout the entire circuit.
The question of whether a logical error has occurred therefore strongly depends on how physical errors are \emph{correlated} across different $T$ gate gadgets.

The IID assumption can be enforced by the use of extended rectangles~\cite{aliferis2005quantum}, which amounts to inserting a QEC cycle between every pair of $T$ gates. %
However, adding such a high density of QEC cycles may decrease the Ramsey fidelity due to increased opportunities for gate, measurement, and idling errors.
We illustrate this effect in \cref{fig:t_gate_sim}, which shows the decay of fidelity in Ramsey protocols with and without QEC cycles, performed using the \texttt{Selene} simulator of \texttt{Guppy} programs~\cite{2026guppylang}.
Notably, adding a QEC cycle between every pair of logical $T$ gates recovers exponential decay behavior, but decreases the overall Ramsey fidelity when the total number of $T$ gates is $\lesssim64$.
The subtle technical challenge of quantifying---or even defining---the fidelity of an isolated logical $T$ gate motivates a shift in focus beyond component benchmarks towards application-level hardware benchmarking.

\subsubsection{Quantum error correction}

\begin{figure}
  \centering
  \includegraphics{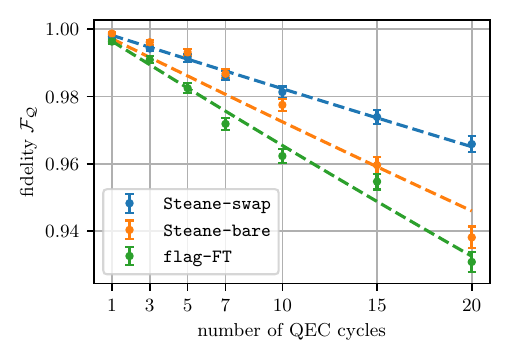}
  \caption{
    Entanglement fidelity as a function of the number of QEC cycles performed using different QEC gadgets:
    flag-fault-tolerant QEC (\texttt{flag-FT}) and two types of Steane-type QEC (\texttt{Steane-bare} and \texttt{Steane-swap}).
    For the Steane-type methods, a single QEC cycle consists of two rounds of error correction, one each for $X$-type and $Z$-type errors.
    QEC cycles were simulated on the \texttt{H2-1E} emulator with $2{,}000$ shots for each of the six initial states used to obtain a single entanglement fidelity.
    Dashed lines show a fit to a single-parameter depolarizing noise model.
    Error bars indicate standard errors on the mean.
  }
  \label{fig:qec_data}
\end{figure}

We now benchmark different strategies for performing QEC.
Letting $\mathcal{Q}$ denote the quantum channel that one (noisy) QEC cycle applies to a logical qubit, the channel fidelity of $n$ QEC cycles with respect to the identity channel $\mathcal{I}$ can be mathematically obtained by applying $\mathcal{Q}^n$ to one qubit of a maximally entangled Bell pair $\ket\phi = \frac1{\sqrt{2}}(\ket{00} + \ket{11})$ and measuring the fidelity of the resulting state with the initial Bell state, altogether arriving at the entanglement fidelity~\cite{nielsen1996entanglement}
\begin{align}
  \mathcal{F}_{\mathcal{Q}}(n)
  = \braket{\phi|(\mathcal{I}\otimes\mathcal{Q}^n)(\phi)|\phi},
\end{align}
where $\phi = \op{\phi}$.
Letting $\Pi_{P,b}$ denote the projector onto the eigenstate of the Pauli operator $P\in\set{X,Y,Z}$ with eigenvalue $(-1)^b$, such that for example $\Pi_{Z,0} = \op{0}$, $\Pi_{Z,1} = \op{1}$, and $\Pi_{X,0} = \op{+}$, the entanglement fidelity can equivalently be written as (see \cref{sec:channel_fidelity})
\begin{align}
  \mathcal{F}_{\mathcal{Q}}(n)
  = \frac14 \sum_{\substack{P\in\set{X,Y,Z}\\b\in\set{0,1}}}
  \Tr\sp{\mathcal{Q}^n(\Pi_{P,b}) \Pi_{P,b}} - \frac12.
  \label{eq:fidelity_qec}
\end{align}
In words, $\Tr\sp{\mathcal{Q}^n(\Pi_{P,b})\Pi_{P,b}}$ is the survival probability of an initial state $\Pi_{P,b}$ after $n$ applications of the channel $\mathcal{Q}$; that is, the probability that the state is still found to be $\Pi_{P,b}$ upon measurement in the $P$ basis.
The QEC fidelity $\mathcal{F}_{\mathcal{Q}}(n)$ can therefore be computed by preparing each of six Pauli eigenstates $\Pi_{P,b}$, applying repeated rounds of QEC, measuring survival probabilities $\Tr\sp{\mathcal{Q}^n(\Pi_{P,b})\Pi_{P,b}}$, and combining these probabilities according to \cref{eq:fidelity_qec}.

\cref{fig:qec_data} shows entanglement fidelities obtained from the \texttt{H2-1E} hardware emulator for each of the flag-fault-tolerant (\texttt{flag-FT}), \texttt{Steane-bare}, and \texttt{Steane-swap} QEC protocols in \cref{sec:qec}.
We note that the limited support for conditional ion transport in the compiler for \texttt{H2-1} (and \texttt{H2-1E}) disadvantages the flag-fault-tolerant QEC strategy, for which ions are transported (albeit without gates or ion cooling) for a secondary round of QED even if the first round of flagged stabilizer measurements yields a trivial syndrome.
Altogether, we conclude from \cref{fig:qec_data} that the \texttt{Steane-swap} protocol outperforms \texttt{Steane-bare} and \texttt{flag-FT}, though future improvements to the \texttt{H2-1} compiler may improve the fidelity of \texttt{flag-FT}.

A single-parameter logical depolarizing model in which each QEC cycle $\mathcal{Q}$ has probability $p$ to replace the logical state of a qubit with the maximally mixed state predicts that these fidelities should decay as (see \cref{sec:channel_fidelity})
\begin{align}
  \mathcal{F}_{\mathcal{Q}}(n) = \frac14 + \frac34 (1-p)^n.
  \label{eq:qec_decay_model}
\end{align}
Fitting the simulation data in \cref{fig:qec_data} to this logical depolarizing noise model, we find entanglement infidelities (defined by $\epsilon = 1 - \mathcal{F}_{\mathcal{Q}}(1) = \frac34 p$) of $3.53(12)\times10^{-3}$ for the \texttt{flag-FT} protocol, and $1.79(3)\times10^{-3}$ for the \texttt{Steane-swap} protocol, where uncertainties in the final digits reflect the standard error from the fit.
The principled inference of an entanglement infidelity for the \texttt{Steane-bare} protocol is obstructed by its non-exponential decay in \cref{fig:qec_data}. 
We leave a deeper investigation of the reasons for non-exponential decay to future work, but remark that this observation does not impact our primary conclusion that \texttt{Steane-swap} is the preferred method for performing active QEC.

\subsection{Application benchmarking}
\label{sec:results_applications}

Having benchmarked the components that are used to run error-corrected quantum algorithms, we now consider application-level benchmarks.
To this end, we construct various QAOA and HHL circuits, compile these circuits into a Clifford+$T$ gate set (as discussed in \cref{sec:compilation} and \cref{sec:hhl_details}), and expand each compiled logical circuit into a physical circuit using Steane-code gadgets. %
Specifically, we use the specialized $\ket{H}$-state preparation circuit in \cref{fig:prep_h_hw} of \cref{sec:prep_h} (an older version of the circuit in \cref{fig:prep_h}\sref{fig:prep_h_new}), the \texttt{T-swap} variant of the $T$ gate in \cref{fig:t_gate_and_qec}\sref{fig:t_gate_swap}, and (when applicable) the \texttt{Steane-swap} variant of a QEC cycle in \cref{fig:t_gate_and_qec}\sref{fig:qec_steane_swap}.
In some cases, we additionally consider varying the RUS limit.

We execute logical circuits on \texttt{H2-2} and \texttt{Helios}.
While both of these devices support real-time classical feedback, such as conditioning a quantum gate on the outcome of a preceding measurement, \texttt{H2-2} programs must nonetheless be compiled into an intermediate representation of serialized instructions prior to execution.
In contrast, \texttt{Helios} supports a real-time engine that consumes \texttt{Guppy}~\cite{2026guppylang} programs, which allow for nonlinear control flow, such as conditional branching and \texttt{FOR} loops.
Moreover, the software stack for \texttt{Helios} allows for real-time dynamic allocation of physical qubits, whereas executing circuits on \texttt{H2-2} requires allocating all physical qubits at compile-time.
In particular, to execute our circuits on \texttt{H2-2} (which has 56 physical qubits) we allocate a fixed choice of seven qubits per Steane code block (including one ancilla code block for $T$ gate teleportation and Steane QEC), and additionally six physical qubits as ancillas that are shared by the Steane code blocks.

\subsubsection{QAOA}
\label{sec:qaoa_experiments}

\begin{figure*}
  \centering
  \includegraphics{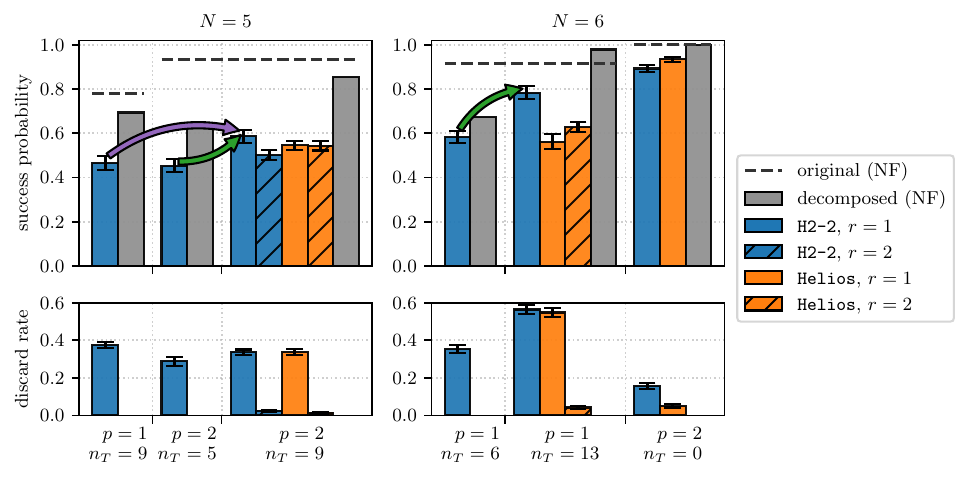}
  \caption{
    Probability of measuring the optimal solution to the LABS problem for $N=5,6$ bits.
    Data is grouped by the number of QAOA layers, $p$, and the total number of $T$ gates in a compiled circuit, $n_T$.
    Dashed lines indicate the success probability for exact, noise-free (NF) QAOA circuits with optimal parameters at a given QAOA depth $p$.
    Grey bars show the noise-free success probability after decomposition to a Clifford+$T$ gate set (which exceeds the undecomposed circuit at $(N,p,n_T)=(6,1,13)$ due to a finite-size effect; note that the decomposed circuit is not limited to the QAOA ansatz structure).
    Orange and blue bars show, respectively, the success probability when running on \texttt{H2-2} and \texttt{Helios}.
    Solid and hatched bars correspond to RUS limits of $r=1$ and $r=2$, respectively.
    Bottom panels show the fraction of hardware runs that are discarded because a subroutine exceeded the RUS limit.
    Error bars indicate standard errors on the mean.
    Increasing $p$ or decomposing to a larger number of $T$ gates increases the success probability on \texttt{H2-2} (purple and green arrows), indicating that the benefits of finer decomposition outweigh the increased errors from running a larger logical circuit.
    Increasing the RUS limit $r$ from $1$ to $2$ slightly decreases the success probability on \texttt{H2-2}, but does not significantly affect the success probability on \texttt{Helios}.
    However, increasing the RUS limit to $r=2$ reduces the program discard rate to nearly zero.
  }
  \label{fig:labs_n5-6}
\end{figure*}

We first consider QAOA circuits for the LABS problem with binary sequences of lengths $N=5,6$ (see \cref{sec:labs}).
\cref{fig:labs_n5-6} shows the measured QAOA success probability (that is, the probability of measuring the optimal solution to a LABS instance) with various choices of the device, QAOA depth $p$, $T$ gate count $n_T$, and RUS limit $r$ (see \cref{tab:labs_5} and \cref{tab:labs_6} of \cref{sec:data_tables} for additional experimental data).
While the number of qubits and $T$ gates in these logical circuits are modest, we emphasize that the corresponding physical circuits are considerably larger, involving up to 55 qubits and several hundred two-qubit gates (see \Cref{tab:qaoa_po,tab:labs_6,tab:labs_hero_run,tab:labs_5}).
Remarkably, for both the $N=5$ and $N=6$ LABS instances we observe that increasing the logical $T$ gate budget improves QAOA performance at depth $p=2$ (green arrows in \cref{fig:labs_n5-6}), despite a considerable increase in physical circuit complexity and two-qubit gate count ($399\to551$ for $N=5$ and $420\to686$ for $N=6$ when $r=1$).
We also observe on \texttt{H2-2} that for $N=5$ the QAOA success probability can be increased by increasing the QAOA depth $p:1\to 2$ at a fixed logical $T$ gate budget of $n_T=9$.
Increasing QAOA depth similarly improves the performance for $N=6$; however, in this case the $p=2$ logical circuit is Clifford, making the physical circuit for $p=2$ slightly smaller than that for $p=1$.

Where data is available, we observe that increasing the RUS limit $r$ slightly decreases the QAOA success probability on \texttt{H2-2}, and has no meaningful impact on the success probability on \texttt{Helios}.
However, increasing the RUS limit to $r=2$ drastically reduces the discard rate of hardware executions, or the probability with which a program must terminate and restart due to a corrupted subroutine, to nearly zero.
This improvement highlights the importance of dynamic repeat-until-success protocols for making fault-tolerant quantum computation scalable.
We remark that the benefit of repeat-until-success on \texttt{Helios} is observed despite the lack of dynamical decoupling in the \texttt{Helios} compiler at the time of these experiments.
The addition of dynamical decoupling, as well as other compiler improvements, can therefore be expected to yield substantial improvements in future \texttt{Helios} performance. %

\begin{figure}
  \centering
  \includegraphics{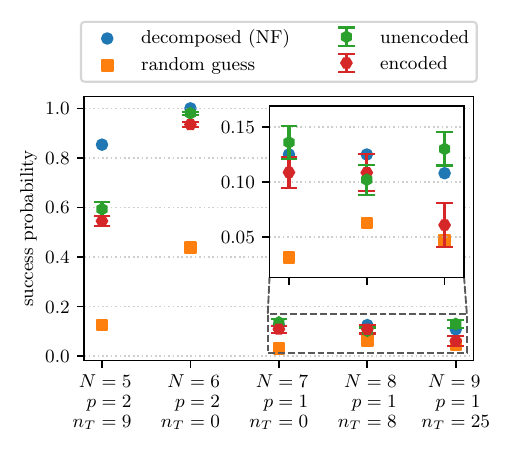}
  \caption{
    Success probability for QAOA circuits with QAOA depth $p$ on LABS instances with varying problem size $N$.
    ``Decomposed (NF)'' corresponds to a noise-free circuit that has been decomposed to a Clifford+$T$ gate set (with $n_T$ $T$ gates in the decomposed circuit).
    ``Unencoded'' and ``encoded'' correspond, respectively, to unencoded (physical) and encoded (logical, RUS limit $r=1$) execution of the same circuit on \texttt{Helios}.
    ``Random guess'' shows the probability of success for a randomly chosen bitstring.
    Error bars indicate standard errors on the mean.
    Decomposed and encoded circuit data for $N=5$ and $N=6$ is the same as in \cref{fig:labs_n5-6}.
    The encoded circuit performs comparably to the unencoded one up through $N=8$, but performance deteriorates at $N=9$ due to the large number of $T$ gates.
  }
  \label{fig:labs_hero_run}
\end{figure}

To assess the scalability of our fault-tolerant QAOA implementation on \texttt{Helios}, we execute logical circuits for LABS problem instances up to $N=9$ variables and a portfolio optimization problem instance with $N=12$ variables.
For LABS instances with $N>6$, the pre-optimized QAOA parameters in \texttt{qokit} yield circuits that, when decomposed into a Clifford+$T$ gate set, exceed the $T$-gate budget that is executable on \texttt{Helios} within practical coherence times.
To identify parameter settings that produce executable circuits while maintaining the performance of the algorithm, we perform a grid search over the two QAOA parameters at depth $p=1$, and select the best-performing parameters that yield compiled circuits with fewer than 30 $T$ gates.
The results from running these LABS experiments are summarized in \cref{fig:labs_hero_run}, and additional data is provided in \cref{tab:labs_hero_run} of \cref{sec:data_tables}.
For LABS instances of size up to $N=8$, we find that the logically encoded circuits yield a probability of success that is significantly greater than a random guess, and is moreover comparable to the probability of success obtained by unencoded circuit execution.
These results suggest the possibility that future hardware upgrades, software-level improvements (such as the addition of dynamical decoupling), and circuit-level decoding may push the performance of error-corrected algorithms beyond break-even, i.e., with logically encoded circuits outperforming unencoded circuits.

\begin{figure}
  \centering
  \includegraphics{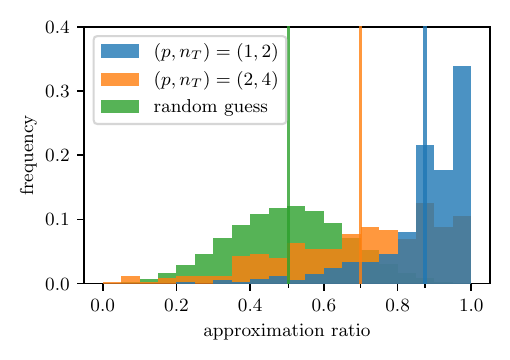}
  \caption{
    Histogram of approximation ratios for a portfolio optimization problem instance with $N=12$ binary variables, obtained by running logical QAOA on \texttt{Helios} with an RUS limit of $r=1$.
    Vertical axis values indicate the fraction of outcomes that fall within each bin (of width $0.05$).
    The distribution of approximation ratios for randomly chosen bitstrings is provided for reference.
    Vertical lines indicate distribution means.
  }
  \label{fig:qaoa_po}
\end{figure}

Finally, we scale QAOA for portfolio optimization to 12 logical qubits, using 97 of the 98 qubits available on the \texttt{Quantinuum Helios} processor.
To identify a portfolio optimization instance within the current capabilities of \texttt{Helios}, we generate multiple $N=12$ problem instances by randomly sampling mean return vectors and covariance matrices from Yahoo Finance data using \texttt{qokit}~\cite{2025qokit}.
For each problem instance, we numerically optimize the QAOA parameters and compile the resulting circuits using the procedure described in \cref{sec:compilation}.
We then select an instance whose compiled logical circuits at QAOA depths $p=1,2$ have small numbers of $T$ gates.
\cref{fig:qaoa_po} shows a histogram of approximation ratios for the logical bitstrings sampled from these circuits on \texttt{Helios}; additional data is provided in \cref{tab:qaoa_po} of \cref{sec:data_tables}.
We remark that these circuits use $7\times12=84$ physical qubits to encode the $N=12$ problem variables, 7 qubits for a logical ancilla qubit, and additionally 6 physical ancilla qubits, for a total of 97 physical qubits.
The compiled $p=1$ circuit has 1132 physical two-qubit gates, and the $p=2$ circuit has 2132 physical two-qubit gates.
Although the $p=2$ circuit performs worse than that with $p=1$ due to its larger physical gate count, both circuits outperform random guessing ($p=0$). %
While this portfolio optimization problem is highly simplified and the circuits are produced in a non-scalable way, our results show that fault-tolerant and error-corrected execution of algorithmic circuits with $12$ logical qubits can achieve high fidelity.

\subsubsection{HHL}
\label{sec:hhl_experiments}

For our final application benchmark, we run the HHL circuit in \cref{fig:circuit_hhl}\sref{fig:hhl_poisson} on \texttt{Helios}.
After decomposing into a Clifford+$T$ gate set (see \cref{sec:hhl_details} for details) and post-selecting on a measurement of $0$ on the last ancilla qubit, this circuit prepares the state
\begin{align}
  \ket\chi
  = \left(\frac{\ket{0} + i e^{i\eta} \ket{1}}{\sqrt{2}}\right)
  \otimes \left(\frac{\ket{0} - \ket{1}}{\sqrt{2}}\right),
  \label{eq:state_hhl}
\end{align}
where $\eta = \arccos\sqrt{8/9} \approx 0.11 \pi$.
The undecomposed circuit prepares the same state with $\eta=0$.
The fidelity $\mathcal{F}_{\ket\chi}(\rho) = \abs{\braket{\chi|\rho|\chi}}^2$ of a (possibly mixed) state $\rho$ with respect to $\ket\chi$ can be recovered from measurements of the first qubit in the $X$ and $Y$ bases, and the second qubit in the $X$ basis.
In the absence of errors, the \texttt{QPE} qubits are guaranteed to end in logical $\ket{0}$ states.
This fact can be leveraged to perform algorithm-level error detection by measuring the \texttt{QPE} qubits and discarding shots in which the \texttt{QPE} measurement outcomes are nonzero.

\begin{figure*}
  \centering
  \includegraphics{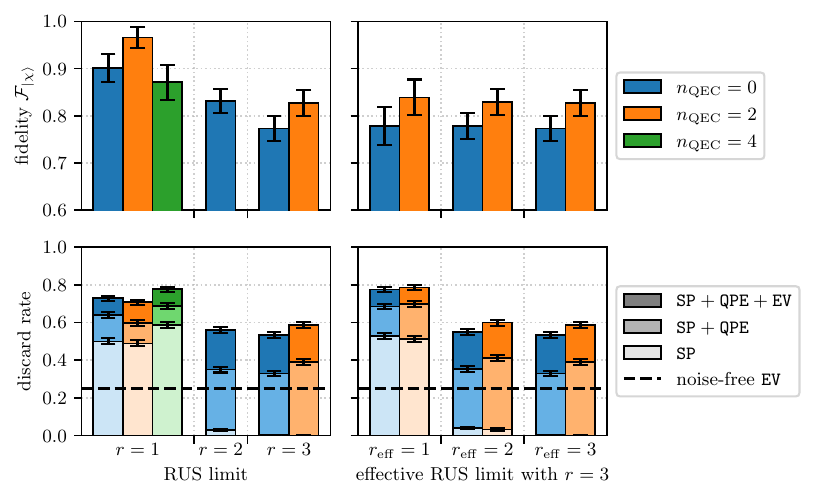}
  \caption{
    Target state fidelities and discard rates of logical HHL circuits executed on \texttt{Helios} with varying RUS limits $r$ (left panels), or when $r=3$ and every RUS subroutine succeeded in at most $r_{\mathrm{eff}}$ attempts (right panels).
    Color indicates the number of QEC cycles, $n_{\mathrm{QEC}}$, inserted into the HHL circuit (see \cref{fig:hhl_compiled} of \cref{sec:hhl_details} for the compiled circuit and QEC cycle locations).
    Shade indicates post-selection criteria: trivial flags at the end of all state preparation routines (\texttt{SP}), logical measurement outcomes of $0$ on the \texttt{QPE} qubits, and a logical measurement outcome of $0$ on the \texttt{EV} qubit.
    The dashed line in the bottom panels indicates the \texttt{EV} discard rate of the noise-free compiled HHL circuit.
    Error bars indicate $68\%$ confidence intervals obtained by bootstrapping (top panels) or standard errors on the mean (bottom panels).
    Adding two QEC cycles ($n_{\mathrm{QEC}}=2$) improves logical circuit fidelity, but additional QEC cycles can introduce more errors than they mitigate ($n_{\mathrm{QEC}}=4$), reflecting that the performance of Steane-encoded algorithms is not yet at break-even.
    Increasing the RUS limit drops the failure rate of state preparation subroutines to nearly zero, but at a moderate cost to the logical fidelity of the HHL circuit.
    However, this fidelity penalty vanishes when filtering data from programs with an RUS limit of $r=3$ on the shots in which all RUS subroutines succeeded in at most $r_{\mathrm{eff}} \le r$ attempts, indicating that the fidelity penalty is an artifact of how dynamic programs with different RUS limits are compiled.
  }
  \label{fig:hhl_rus}
\end{figure*}

\cref{fig:hhl_rus} shows the fidelities and discard rates of logical HHL circuits executed on \texttt{Helios} with different RUS limits $r$ and numbers of QEC cycles, $n_{\mathrm{QEC}}$ (see \cref{fig:hhl_compiled} of \cref{sec:hhl_details} for the compiled logical circuit and the locations of QEC cycles, as well as \cref{tab:hhl_data} of \cref{sec:data_tables} for additional experimental data).
The physical circuits for these experiments use 48 physical qubits and, at an RUS limit of $r=1$, have two-qubit gate counts of $(677, 749, 821)$ for $n_{\mathrm{QEC}}=(0,2,4)$.
We find that the addition of two QEC cycles in the middle of the circuit improves the fidelity of the measured logical state with respect to the expected state in \cref{eq:state_hhl}, which can be as high as $0.97\pm0.02$.
However, additional QEC cycles can degrade the performance of the algorithm, reflecting that the Steane code is near---but not yet below---the break-even point for these algorithms.
Similarly to our QAOA benchmarking, we find that increasing the RUS limit to $r=2$ decreases the probability of failure for state preparation subroutines to nearly zero (see \texttt{SP} discard rates in \cref{fig:hhl_rus}).
Unlike the case of QAOA, however, this benefit comes at the cost of reduced logical circuit fidelity for the HHL circuit (see top left panel of \cref{fig:hhl_rus}).

To diagnose the reason for the fidelity penalty for higher RUS limits, we examine the data from programs with an RUS limit of $r=3$ more closely, and consider the shots in which all RUS subroutines succeed in at most $r_{\mathrm{eff}}\in\set{1,2,3}$ attempts.
This analysis acts as an experimental control to isolate effects that occur in hardware from those that occur at the software level.
In principle, if a program has an RUS limit of $r_{\mathrm{high}}$ but all RUS subroutines succeed in at most $r_{\mathrm{low}} \le r_{\mathrm{high}}$ attempts, the results on hardware should be identical to those of a program with an RUS limit of $r_{\mathrm{low}}$.
However, we find that the fidelity penalty for higher RUS limits vanishes when filtering $r=3$ data to lower effective RUS limits $r_{\mathrm{eff}} < 3$ (top right panel of \cref{fig:hhl_rus}), which indicates that the fidelity penalty is not a hardware limitation, but rather an artifact of how the programs with different RUS limits were written or compiled.
This observation highlights the importance of application-level benchmarking for uncovering behaviors of a quantum computer that isolated component benchmarking may not reveal.

Finally, we observe in \cref{fig:hhl_rus} that the probability of obtaining nonzero logical measurement outcomes on the \texttt{QPE} qubits is consistently high ($>0.3$), implying a large logical error rate, even in the case of $r\ge2$ when all state preparation subroutines succeed with high probability (indicated by a low \texttt{SP} discard rate).
We remark that the use of fault-tolerant state preparation subroutines, which account for the vast number of physical gates in the HHL circuit, suppresses the physical gate errors of $p\sim10^{-3}$ on \texttt{Helios} to logical error rates of $\mathcal{O}(p^2)$.
The large number of physical gates in the HHL circuit therefore cannot account for the observed logical error rates.
It is unclear whether idling errors can account for the observed logical error rates, though upgrades to the \texttt{Helios} compiler---such as the addition of dynamical decoupling---should mitigate idling errors and provide clarity on their impact.
These logical error rates also reflect the need for either extended rectangles (by, for example, adding a QEC cycle on every logical qubit between every pair of gates)~\cite{aliferis2005quantum} or correlated decoding (in the spirit of algorithmic fault tolerance~\cite{cain2024correlated, zhou2025lowoverhead}) to control the propagation of physical errors throughout the circuit.
Without such techniques, physical errors can propagate to logical errors even when using transversal gates and fault-tolerant gadgets (see \cref{sec:decoding} for a toy example).
In practice, the construction of extended rectangles incurs substantial quantum computing overheads, whereas correlated decoding can be performed at the software level with modest classical computing overheads (see for example Ref.~\cite{cain2025fast}).
We therefore anticipate that future efforts to fault-tolerantly execute error-corrected quantum algorithms on trapped-ion quantum architectures will require dynamical decoupling and more advanced decoding techniques to further suppress logical error rates.

\section{Discussion}
\label{sec:discussion}

We initiate the application-level benchmarking of quantum processors at the logical level with both error correction and fault tolerance, which was previously obstructed by small qubit counts and low fidelities.
Our results show that existing trapped-ion quantum computers, namely, \texttt{Quantinuum System Model H2} and \texttt{Quantinuum Helios}, operate near break-even for complex algorithmic circuits that are logically encoded into the Steane code, altogether involving dozens of physical qubits and hundreds of gates.
Furthermore, our experiments suggest that modest improvements to circuit gadgets, compilation, decoding, or the choice of QEC code used may be sufficient to achieve the algorithmic break-even point~\cite{hao2025compilation} at which fault-tolerantly encoded quantum programs outperform their unencoded counterparts.

QAOA has long been used as a benchmark for physical quantum processors.
In the absence of noise, increasing QAOA depth can only improve performance.
However, on noisy devices increasing depth causes a corresponding growth of errors, leading to the existence of an optimal QAOA depth $p_\star$ beyond which performance no longer improves.
Implementing $N$-qubit QAOA with a large $N\cdot p_\star$ was the central goal of the DARPA ONISQ program~\cite{DARPA_ONISQ}.
Multiple independent benchmarking studies have used this quantity to characterize the capabilities of physical~\cite{shaydulin2023qaoa, pelofske2024scaling, pelofske2023highround, montanez-barrera2025evaluating, lubinski2024optimization} or partially fault-tolerant~\cite{he2025performance, jin2025iceberg} devices.
In this work, we extend this approach to benchmarking fault-tolerant error-corrected processors, achieving $N\cdot p_\star = 6\cdot 2 = 12$.

Targeting application-level benchmarks yields several important insights.
First, it provides an overview and synthesis of all concepts and components that must be integrated for fault-tolerant and error-corrected quantum program execution.
As devices continue to scale up, it is important to continue identifying and benchmarking alternative methods for each necessary function, such as active QEC or magic state injection.
Second, our application benchmarks demonstrate key capabilities, most notably the ability to suppress errors through active QEC and to reduce discard rates through dynamic (RUS) subroutines, at minimal or no cost to algorithm performance.
Finally, these benchmarks reveal gaps and challenges that need to be addressed in future work.
The compiled HHL circuit, for example, requires 14 magic states and 28 two-qubit gates, but its logical fidelity is substantially worse than the component logical error rates of $\sim 10^{-3}$ would suggest (e.g., a rough infidelity estimate of $1 - 0.999^{14}\approx 0.01$ obtained by counting logical $T$ gates, or even $1 - 0.999^{14+28} \approx 0.04$ by counting logical $T$ and CNOT gates).
Instead, our application benchmarks reveal the importance of factors such as the compilation of dynamic programs, the fault-tolerant \emph{composition} of fault-tolerant gadgets (e.g., via extended rectangles), and correlated or circuit-level decoding.
The shift in focus beyond components and error rates as bottlenecks of quantum algorithm performance simultaneously speaks to the maturity of the field and signals the advent of the early-fault-tolerant era for quantum computing.

\section{Acknowledgments}

We thank Matthew DeCross and Kentaro Yamamoto for helpful feedback on a draft of this manuscript.
We acknowledge the entire Quantinuum team for innumerable contributions towards the design and operation of the \texttt{H2} and \texttt{Helios} quantum computers, and we acknowledge Honeywell for fabricating the ion trap used in this experiment.
We thank Rob Otter for executive support of this work and invaluable feedback on this project.
We thank the technical staff at JPMorganChase's Global Technology Applied Research for their support.

\bibliographystyle{apsrev4-2-author-truncate}
\bibliography{main.bib}

\clearpage
\appendix
\onecolumngrid

\section{Improved magic-state preparation}
\label{sec:prep_h}

\begin{figure}
  \includegraphics[scale=0.75]{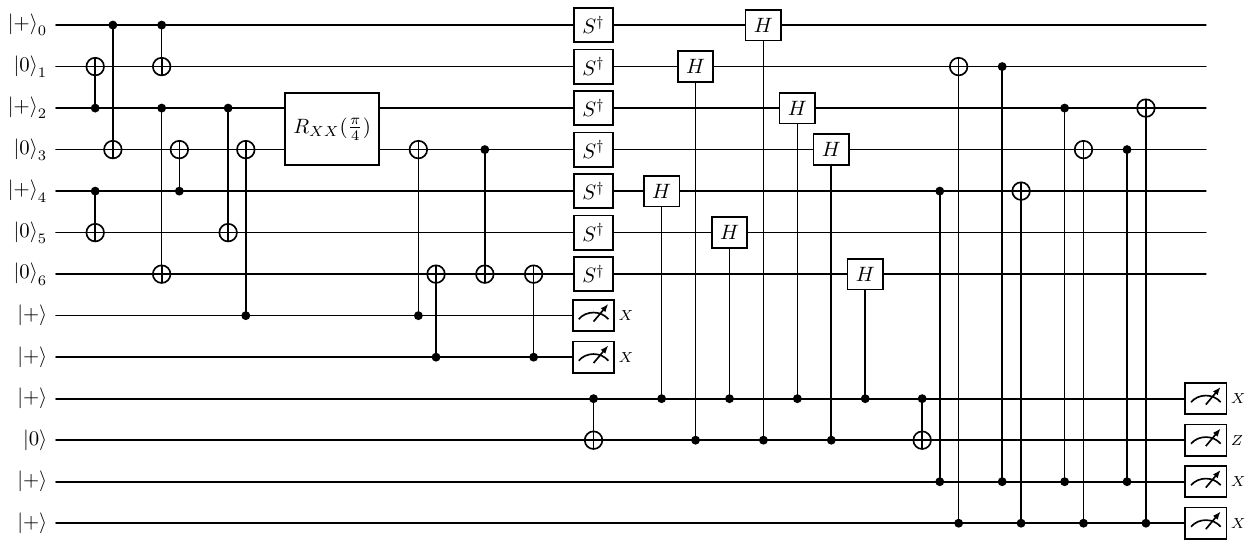}
  \caption{
    A circuit for fault-tolerantly preparing logical $\ket{H}$ states with six ancilla qubits, 30 two-qubit gates, and a two-qubit gate depth of 9 (with as-soon-as-possible gate scheduling).
    This circuit was used for hardware benchmarks in \cref{sec:results}.
    An improved version of this circuit, discovered after our hardware experiments, is provided in \cref{fig:prep_h}\sref{fig:prep_h_new}, which has one fewer ancilla qubit and two fewer two-qubit gates.
  }
  \label{fig:prep_h_hw}
\end{figure}

Here we discuss the construction of our specialized circuit in \cref{fig:prep_h_hw} for fault-tolerantly preparing an $\ket{\overline{H}}$ state on hardware.
Additionally, we give a slightly improved variant of the circuit using one fewer ancilla and two fewer CNOT gates, which was identified after completing our experiments.

In Ref.~\cite{chamberland2019faulttolerant} the authors give a fault-tolerant preparation circuit for $\ket{\overline{H}}$ that is composed of three parts:
\begin{enumerate}
  \item\label{step:nonft_h}
    A non-fault-tolerant magic state preparation circuit.
  \item\label{step:meas_h}
    A flag fault-tolerant circuit for measuring the logical Hadamard operator.
  \item\label{step:qed}
    A full syndrome extraction cycle.
\end{enumerate}
We follow a similar overall approach but modify steps \ref{step:nonft_h} and \ref{step:qed} to minimize the number of two-qubit gates.
We reduce the number of two-qubit gates from 48 in the circuit given in Ref.~\cite{chamberland2019faulttolerant} to 30 for the circuit used in our experiments, and then to 28 in our improved version given here. %

First we consider step \ref{step:nonft_h}, the non-fault-tolerant magic state preparation circuit.
As discussed in \cref{sec:steane_non_cliff} of the main text, our non-fault-tolerant magic state preparation circuit is constructed by back-propagating a representative of $\overline{X}$ until its support has weight 2.
This allows us to implement the logical rotation $R_{\overline{X}}(\theta) = \exp(-i\frac{\theta}{2}\overline{X})$ with a physical two-qubit gate, $R_{XX}(\theta) = \exp(-i\frac{\theta}{2} X\otimes X)$.
Next, following the approach of Ref.~\cite{forlivesi2025flag}, we flag this subcircuit to detect as many errors as possible at origin.
The circuit is not fully bipartite, so it cannot be fully flagged in this way, but many errors can be detected at this early stage.

Moving on to step \ref{step:qed}, we describe our final syndrome extraction gadgets.
We seek to find a minimal set of stabilizers such that if we measure them we will detect all errors arising from faults in the circuit.
To do this, we propagate all possible Pauli errors occurring after each gate in the circuit to the end of the circuit.
Since the circuit is not Clifford, this can produce linear combinations of Pauli strings at the end of the circuit.
For each propagated error, we exhaustively minimize its weight under the action of both the stabilizer group and the logical stabilizer $\overline{H}$.
If the propagated error has minimal weight less than 2, then it can be ignored.
In the case of linear combinations of Pauli strings, we apply weight reduction and filtering (after possible multiplication with the logical stabilizer $\overline{H}$) to each term of the superposition independently.
This procedure gives us a list of problematic Pauli strings that arise from some fault in the circuit and need to be detected.
We exhaustively compare this list to the full stabilizer group and find that there is no single element of the stabilizer group that anti-commutes with all problematic errors, but there are pairs of weight-4 stabilizers that anti-commute with all of them.
Out of these pairs, we choose one where the two stabilizers, one $X$ type and one $Z$ type, have the same support, as these can most efficiently be measured together.

Finally, we revisit step \ref{step:nonft_h} and attempt to remove any flag-at-origin gadgets that are no longer needed because they detect errors that are now also detected by the final syndrome measurements.
In this case, we are able to remove the majority of them.
The circuit in \cref{fig:prep_h_hw} retains two flags, whereas the circuit in \cref{fig:prep_h}\sref{fig:prep_h_new} retains only one. %

\section{Flag-fault-tolerant QEC}
\label{sec:flag_qec}

\begin{figure}
  ~\hfill
  \subfloat[\label{fig:meas_xzz}]
  {\includegraphics[scale=0.75]{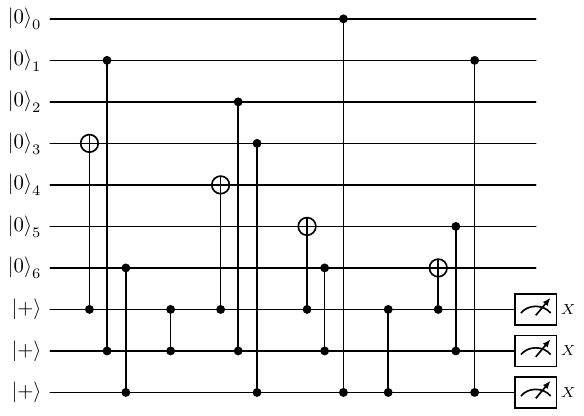}}
  \hfill
  \subfloat[\label{fig:meas_zxx}]
  {\includegraphics[scale=0.75]{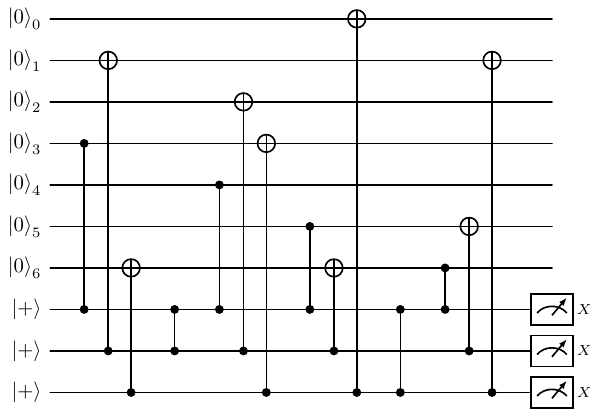}}
  \hfill~
  \caption{Subcircuits used for \sref{fig:meas_xzz} $XZZ$ and \sref{fig:meas_zxx} $ZXX$ stabilizer measurements in flagged QEC, adapted from Ref.~\cite{ryan-anderson2021realization}.}
  \label{fig:flag_qec}
\end{figure}

Here we describe, for completeness (that is, to provide explicit circuits and protocols for all gadgets used in this work), the flag-fault-tolerant QEC protocol in Ref.~\cite{ryan-anderson2021realization}.
This protocol is adapted slightly to be consistent with the choice of qubit ordering for the Steane code in this work.
The protocol is as follows.
\begin{enumerate}
  \item Perform a round of $XZZ$ stabilizer measurements using the flagged circuit in \cref{fig:flag_qec}\sref{fig:meas_xzz} and store the syndrome measurement outcomes, from top to bottom, to the bits $(x_1, z_2, z_3) \in \set{0, 1}^3$.
  \item If $(x_1, z_2, z_3) \ne (0, 0, 0)$, set $(z_1, x_2, x_3) = (0, 0, 0)$; otherwise, perform a round of $ZXX$ stabilizer measurements using the flagged circuit in \cref{fig:flag_qec}\sref{fig:meas_zxx} and store the syndrome measurement outcome to the bits $(z_1, x_2, x_3) \in \set{0, 1}^3$.
    If $x_1 = x_2 = x_3 = z_1 = z_2 = z_3 = 0$, stop the protocol without proceeding to the following steps.
  \item Perform a non-fault-tolerant bare-ancilla measurement of all $X$ stabilizer generators, corresponding to rows of the parity check matrix $h$ in \cref{eq:parity_check_matrix}, using a circuit analogous to that in \cref{fig:pauli_measurement_gadget}.
    Save the measurement outcome of the stabilizer with support given by the (first, second, third) rows of $h$ to the bits $(x_4, x_5, x_6)$.
  \item Similarly measure all $Z$ stabilizer generators and save the measurement outcomes to the bits $(z_4, z_5, z_6)$.
  \item Decode the syndromes $(x_4, x_5, x_6)$ and $(z_4, z_5, z_6)$, and perform active error correction as discussed in \cref{sec:steane_measurement}.
  \item Let
    \begin{align}
      E =
      \begin{pmatrix}
        1 & 0 & 0 & 0 & 1 & 1 \\
        1 & 0 & 0 & 0 & 0 & 1 \\
        0 & 1 & 1 & 0 & 0 & 1
      \end{pmatrix}.
    \end{align}
    If $(x_1, x_2, x_3, x_4, x_5, x_6) \in \rows(E)$, apply a logical $Z$ operator.
  \item If $(z_1, z_2, z_3, z_4, z_5, z_6) \in \rows(E)$, apply a logical $X$ operator.
\end{enumerate}
In words, the last two steps of this protocol correct hook errors that (i) may occur during flagged $XZZ$ and $ZXX$ syndrome readout, and (ii) propagate to logical errors.
Note that the protocol presented here is technically different from that in Ref.~\cite{ryan-anderson2021realization}, even after accounting for qubit ordering conventions, due to a different choice of stabilizer measurements for $(x_4, x_5, x_6)$ and $(z_4, z_5, z_6)$, which changes the conditions (that is, the matrix $E$) used to correct hook errors.
The protocol outlined here is used for \texttt{Helios} experiments, while the protocol in Ref.~\cite{ryan-anderson2021realization} (which is implemented in \texttt{PECOS}~\cite{ryan-anderson2018pecos}) was used for \texttt{H2-2} experiments.

\section{Details of the HHL implementation}
\label{sec:hhl_details}

Here we describe our compilation of the HHL algorithm to a QLSP that is suitable for near-term benchmarking.
For reference, the QLSP we consider is that of finding a least-squares solution to the Poisson equation
\begin{align}
  \nabla^2 \bm x = \bm b
\end{align}
for $\bm b = (1, -1, i, -i)$, which corresponds to the initial state $\ket{\bm b} = \ket{+Y}\otimes\ket{-X}$, where $\ket{+Y} = SH\ket{0}$ and $\ket{-X} = ZH\ket{0}$ are Pauli eigenstates.
The key components of the HHL circuit that we need to construct are a quantum phase estimation circuit for steps \ref{step:QPE} and \ref{step:QPE_inv} of the algorithm (as described in \cref{sec:hhl}), and an eigenvalue inversion circuit for step \ref{step:inversion}.
These components are combined to obtain the circuit in \cref{fig:circuit_hhl}\sref{fig:hhl_poisson}.
The final logical HHL circuit that we execute on \texttt{Helios} is provided in \cref{fig:hhl_compiled}.

\begin{figure}
  \includegraphics[width=\columnwidth]{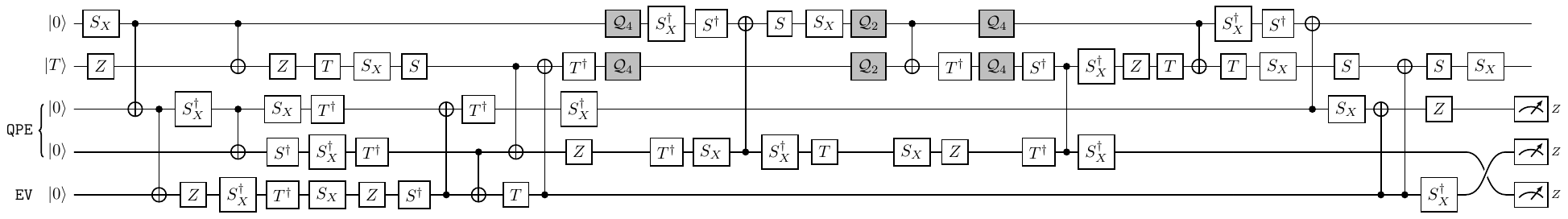}
  \caption{
    The logical HHL circuit that we execute on \texttt{Helios}, obtained by compiling the circuit in \cref{fig:circuit_hhl}\sref{fig:hhl_poisson}.
    Grey boxes indicate locations at which QEC cycles are inserted in experiments with a total of $n_{\mathrm{QEC}}=2$ ($\mathcal{Q}_2$) or $n_{\mathrm{QEC}}=4$ ($\mathcal{Q}_4$) QEC cycles.
    An explicit SWAP on the last two qubits is added to track the identity of logical qubits, so that the inputs and outputs to the circuit are ordered identically to \cref{fig:circuit_hhl}\sref{fig:hhl_poisson}; in practice, this SWAP is performed virtually by relabeling qubits.
    In total, this circuit has one initial $\ket{T}$ state, 13 $T$ gates, and 15 two-qubit (CNOT or CZ) gates.
  }
  \label{fig:hhl_compiled}
\end{figure}

\subsection{Quantum phase estimation}

\begin{figure}
  \includegraphics[scale=0.75]{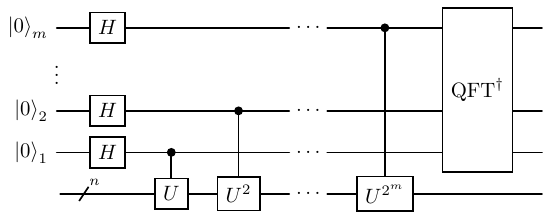}
  \caption{
    The circuit for performing quantum phase estimation (QPE) to $m$ bits of precision, used in step \ref{step:QPE} of the HHL algorithm (\cref{sec:hhl}).
    Here $U = e^{i2\pi A}$.
  }
  \label{fig:qpe_full}
\end{figure}

Quantum phase estimation for step \ref{step:QPE} of the HHL algorithm is generally implemented to $m$ bits of precision by the circuit in \cref{fig:qpe_full}.
Judiciously normalizing $A = -\nabla^2 / 8$, we observe that the eigenvalues of $A$ are $\set{0,\frac12,\frac14}$, so two bits of precision suffice to compute the eigenvalues of $A$ exactly.
Letting $U = e^{i2\pi A}$, we observe that $A$ is diagonalized by the quantum Fourier transform (QFT), which means that
\begin{align}
  \includegraphics[valign=c, scale=0.75]{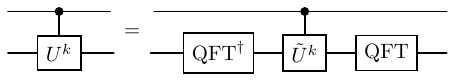},
\end{align}
where $\tilde U = e^{i2\pi\tilde A}$ is diagonal and $\tilde A = \mathrm{QFT}^\dag \cdot A \cdot \mathrm{QFT}$.
Specifically,
\begin{align}
  \tilde U
  = \diag(1, -i, -1, -i)
  = CX (S \otimes S) CX,
\end{align}
which implies that $\tilde U^2 = CX (Z\otimes Z) CX = I\otimes Z$, and in turn
\begin{align}
  \includegraphics[valign=c, scale=0.75]{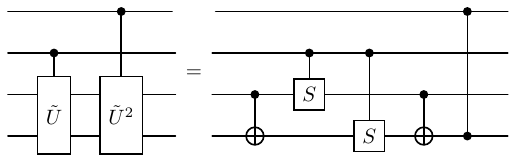}.
\end{align}
Here the controlled-$S$ gates can be implemented using 3 $T$ gates:
\begin{align}
  \includegraphics[valign=c, scale=0.75]{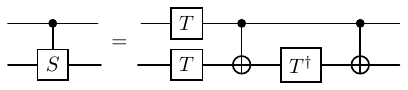}.
\end{align}

\subsection{Eigenvalue inversion}

Here we describe the eigenvalue inversion circuit that we use to implement the eigenvalue transformation in step \ref{step:inversion} of the HHL algorithm, namely
\begin{align}
  \ket{\lambda_j}
  \to \ket{\lambda_j} \otimes
  \sp{\tilde\lambda_j^{-1} \ket{0} + \sqrt{1 - \tilde\lambda_j^{-2}} \ket{1}}
  \label{eq:eigeninv}
\end{align}
where $j$ indexes an eigenvector (with eigenvalue $\lambda_j$) of the matrix $A$ that is to be inverted by HHL, and $\tilde\lambda^{-1}_j = C / \lambda$ for some constant $C$ that enforces $\abs{\tilde\lambda_j^{-1}}\le1$ (unless $\lambda_j=0$, for which we define $\tilde\lambda_j^{-1}=0$).

We consider the case in which eigenvalues $\lambda\in\set{0, \frac14, \frac12}$ are stored in a two-qubit register, $\ket{\lambda_j} = \ket{a_j}\otimes\ket{b_j}$, where we associate each value $\lambda_j \sim (a_j,b_j) \in \set{0,1}^2$ with a binary decimal encoding,
\begin{align}
  0 \sim (0, 0),
  &&
  \frac14 \sim (0, 1),
  &&
  \frac12 \sim (1, 0),
  &&
  \frac34 \sim (1, 1).
  \label{eq:eigenencoding}
\end{align}
Choosing $C = 1/4$, the desired transformations of a $\ket{0}$-state ancilla for eigenvalue inversion are
\begin{align}
  \renewcommand{\arraystretch}{1.5}
  \begin{array}{c|c|c|c}
    \lambda_j & (a_j, b_j) & \tilde\lambda_j^{-1} & \mathrm{final~ancilla~state}
    \\ \hline \hline
    0 & (0, 0) & 0 & \ket{1}
    \\
    \frac14 & (0, 1) & 1 & \ket{0}
    \\
    \frac12 & (1, 0) & 1/2 & \frac{1}{2}\ket{0} + \frac{\sqrt{3}}{{2}}\ket{1}
  \end{array}
\end{align}
where the target ancilla state for $\lambda_j=\frac12$ is $R_Y(\frac{2\pi}{3}) \ket{0}$.
These transformations are implemented coherently by the circuit
\begin{align}
  \includegraphics[valign=c, scale=0.75]{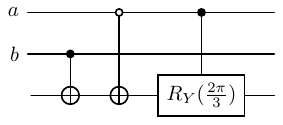}.
\end{align}
Using the Solovay-Kitaev algorithm~\cite{kitaev1997quantum, dawson2005solovaykitaev}, we find that controlled-$R_Y(\frac{2\pi}{3})$ gate can be approximated to a channel fidelity of $\sim0.966$ by the Clifford+$T$ decomposition
\begin{align}
  \includegraphics[valign=c, scale=0.75]{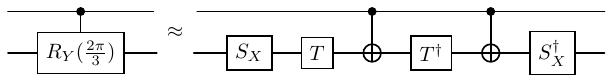},
\end{align}
where $S_X = R_X(\frac{\pi}{2}) = HSH$.

\section{The Ramsey protocol with a dephasing noise model}
\label{sec:ramsey}

Here we describe a single-parameter dephasing model, which we call \texttt{DepModel} in the main text, of a Ramsey protocol in which repeated $T$ gates are applied to an initial $\ket{+}$ state.
For a fixed error parameter $p$, this model predicts that the Ramsey fidelity $\mathcal{F}_{T,p}(n)$ (that is, the fidelity of the noisily prepared state with respect to the noiseless target state) decays exponentially with respect to the number of applied $T$ gates, $n$.

After Pauli twirling, a noisy $T$ gate can be described by the single-qubit channel~\cite{piveteau2021error}
\begin{align}
  \mathcal{T}_p = (1 - p) \mathcal{T} + p \mathcal{D}_\mathrm{z}
  \label{eq:noisy_T}
\end{align}
where $\mathcal{T}$ is the channel for a noiseless $T$ gate and $\mathcal{D}_\mathrm{z}$ is a dephasing channel, which act on a single-qubit density matrix $\sigma$ as
\begin{align}
  \mathcal{T}(\sigma) = T\sigma T^\dag,
  &&
  \mathcal{D}_\mathrm{z}(\sigma)
  = \frac12(\sigma + Z\sigma Z)
  = \sigma_{00}\op{0} + \sigma_{11} \op{1}.
\end{align}
Here $\sigma_{ab} = \braket{a|\sigma|b}$, and the sum of channels $A$ and $B$ acts on a density matrix as $(A+B)(\rho) = A(\rho) + B(\rho)$.
The parameter $p\in[0,1]$ can be interpreted as the probability of a dephasing event in which all relative phase information is lost---in which case $\sigma$ is replaced by $\mathcal{D}_\mathrm{z}(\sigma)$---when applying a noisy $T$ gate.
Note that the definition of the noisy $T$ gate in \cref{eq:noisy_T} is slightly different from that in Eq.~(5) of Ref.~\cite{mayer2024benchmarking}; the definition used here is chosen in order to leverage properties of the dephasing channel $\mathcal{D}_\mathrm{z}$ below.

The primary claim of this section is the following:
\begin{lemma}
  If $\sigma$ is the density matrix of a pure state that is an equal superposition of $\ket{0}$ and $\ket{1}$, for which $\sigma_{00} = \sigma_{11} = 1/2$, then the fidelity of the noisy state $\mathcal{T}_p^n(\sigma)$ with respect to the noiseless target state $\mathcal{T}^n(\sigma) = T^n \sigma {T^\dag}^n$ is
  \begin{align}
    \mathcal{F}_{T,p}(n) = \frac12 + \frac12 (1 - p)^n.
    \label{eq:fidelity_T_dep}
  \end{align}
  \label{lemma:dephasing}
\end{lemma}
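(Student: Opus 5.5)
The plan is to show that $\mathcal{T}_p^n(\sigma)$ is simply a convex mixture of the ideal output and its fully dephased version. Concretely, I will aim for
\begin{align}
  \mathcal{T}_p^n(\sigma) = (1-p)^n\, \mathcal{T}^n(\sigma) + \sp{1-(1-p)^n}\, \mathcal{D}_\mathrm{z}(\sigma).
\end{align}
The fidelity then follows by a one-line computation.

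The key structural facts come first. The channel $\mathcal{T}$ is conjugation by a diagonal unitary, so it commutes with $\mathcal{D}_\mathrm{z}$: it only multiplies the off-diagonal entries of $\sigma$ by phases, while $\mathcal{D}_\mathrm{z}$ deletes them. Consequently
\begin{align}
  \mathcal{T}\mathcal{D}_\mathrm{z} = \mathcal{D}_\mathrm{z}\mathcal{T} = \mathcal{D}_\mathrm{z},
  \qquad
  \mathcal{D}_\mathrm{z}^2 = \mathcal{D}_\mathrm{z}.
\end{align}
The first identity holds because the diagonal entries are invariant under $\mathcal{T}$.

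With these facts, expand $\mathcal{T}_p^n = \sp{(1-p)\mathcal{T} + p\mathcal{D}_\mathrm{z}}^n$ as a sum over words in $\mathcal{T}$ and $\mathcal{D}_\mathrm{z}$. Any word containing at least one $\mathcal{D}_\mathrm{z}$ collapses to $\mathcal{D}_\mathrm{z}$. The only word without one is $\mathcal{T}^n$, with weight $(1-p)^n$. All remaining weights sum to $1-(1-p)^n$. Alternatively, one can argue by induction on $n$ using the same identities; either route gives the displayed mixture.

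Next, compute the fidelity. The target $\mathcal{T}^n(\sigma)$ is a pure state $\ket{\psi_n}$ with $\abs{\braket{0|\psi_n}}^2 = \abs{\braket{1|\psi_n}}^2 = 1/2$. Its overlap with the target is $\braket{\psi_n|\mathcal{T}^n(\sigma)|\psi_n} = 1$. Its overlap with the dephased state is
\begin{align}
  \braket{\psi_n|\mathcal{D}_\mathrm{z}(\sigma)|\psi_n} = \sigma_{00}\cdot\tfrac12 + \sigma_{11}\cdot\tfrac12 = \tfrac12.
\end{align}
By linearity, the fidelity is $(1-p)^n + \tfrac12\sp{1-(1-p)^n} = \tfrac12 + \tfrac12(1-p)^n$, which is the claimed formula.

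The main obstacle is only interpretive: the paper's notation $\abs{\braket{\chi|\rho|\chi}}^2$ elsewhere is ambiguous. I will use the standard fidelity with a pure target, $\braket{\psi|\rho|\psi}$, which matches the Ramsey survival probability. The remaining steps are routine once the commutation and absorption identities are established.
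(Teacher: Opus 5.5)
Your proposal is correct and takes essentially the same route as the paper: the absorption identities $\mathcal{T}\mathcal{D}_\mathrm{z} = \mathcal{D}_\mathrm{z}\mathcal{T} = \mathcal{D}_\mathrm{z}$ and $\mathcal{D}_\mathrm{z}^2 = \mathcal{D}_\mathrm{z}$ collapse $\mathcal{T}_p^n$ to $(1-p)^n\mathcal{T}^n + [1-(1-p)^n]\mathcal{D}_\mathrm{z}$. The fidelity is then the overlap $\mathrm{Tr}[\mathcal{T}_p^n(\sigma)\mathcal{T}^n(\sigma)]$, and your $\braket{\psi_n|\mathcal{D}_\mathrm{z}(\sigma)|\psi_n} = \tfrac12$ is just the paper's $\sigma_{00}^2 + \sigma_{11}^2$ evaluated at $\sigma_{00} = \sigma_{11} = \tfrac12$.
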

\begin{proof}
  Observe that:
  \begin{enumerate}[label=(\alph*)]
    \item The dephasing channel deletes phase information, meaning that $\mathcal{T} \circ \mathcal{D}_\mathrm{z} = \mathcal{D}_\mathrm{z} \circ \mathcal{T} = \mathcal{D}_\mathrm{z}$.
    \item The dephasing channel is a projection, meaning that $\mathcal{D}_\mathrm{z}^2 = \mathcal{D}_\mathrm{z}$.
  \end{enumerate}
  Together, these facts allow us to expand an $n$-fold application of the noisy $T$ gate as
  \begin{align}
    \mathcal{T}_p^n
    = (1-p)^n \mathcal{T}^n
    + [1-(1-p)^n] \mathcal{D}_\mathrm{z}.
  \end{align}
  If $\sigma$ is a pure state, then $\mathcal{T}^n(\sigma)$ is a pure state, whose fidelity $\mathcal{F}_{T,p}(n)$ with the noisy state $\mathcal{T}_p^n(\sigma)$ is
  \begin{align}
    \mathcal{F}_{T,p}(n)
    &= \mathrm{Tr}[\mathcal{T}_p^n(\sigma) \mathcal{T}^n(\sigma)] \\
    &= (1 - p)^n \mathrm{Tr}[\mathcal{T}^n(\sigma) \mathcal{T}^n(\sigma)]
    + [1-(1-p)^n] \mathrm{Tr}[\mathcal{D}_\mathrm{z}(\sigma) \mathcal{T}^n(\sigma)] \\
    &= (1 - p)^n
    + [1-(1-p)^n] (\sigma_{00}^2 + \sigma_{11}^2).
  \end{align}
  Substituting $\sigma_{00} = \sigma_{11} = 1/2$ yields $\mathcal{F}_{T,p}(n) = \frac12 + \frac12 (1 - p)^n$.
\end{proof}
Note that the infidelity of a single noisy $T$ gate acting on the initial state $\op{+}$ is $\epsilon = 1 - \mathcal{F}_{T,p}(1) = p/2$.
Substituting $p=2\epsilon$ into Eq.~\eqref{eq:fidelity_T_dep} recovers the expression in Eq.~(6) of Ref.~\cite{mayer2024benchmarking}.

\section{Entanglement fidelity of a single-qubit channel}
\label{sec:channel_fidelity}

The entanglement fidelity of a single-qubit channel $\mathcal{C}$ is~\cite{nielsen1996entanglement}
\begin{align}
  \mathcal{F}(\mathcal{C})
  = \braket{\phi|(\mathcal{I}\otimes\mathcal{C})(\phi)|\phi}
  = \Tr\sp{(\mathcal{I}\otimes\mathcal{C})(\phi)\phi}
\end{align}
where $\ket\phi = \frac1{\sqrt2} \sum_{a\in\set{0,1}} \ket{aa}$ is a maximally entangled state of two qubits and $\phi = \op{\phi}$.
Unlike the commonly used gate fidelity, the entanglement fidelity penalizes the deleterious effect of gate errors on the entanglement between the target qubits of a gate and other qubits in a quantum computation~\cite{nielsen1996entanglement}.
For reference, the entanglement fidelity $\mathcal{F}_{\mathcal{G}}$ of a noisy channel $\mathcal{G}$ acting on a $d$-dimensional system ($d=2$ for a qubit) is related to the gate fidelity $\mathcal{F}_{\mathcal{G}}^{\mathrm{gate}}$ by $\mathcal{F}_{\mathcal{G}}^{\mathrm{gate}} = (d\mathcal{F}_{\mathcal{G}} + 1) / (d+1)$~\cite{horodecki1999general, nielsen2002simple}.

\begin{lemma}
  The entanglement fidelity of the single-qubit channel $\mathcal{C}$ can be written as
  \begin{align}
    \mathcal{F}(\mathcal{C})
    = \frac14 \sum_{\substack{P\in\set{X,Y,Z}\\b\in\set{0,1}}}
    \Tr\sp{\mathcal{C}(\Pi_{P,b}) \Pi_{P,b}} - \frac12,
  \end{align}
  where $\Pi_{P,b} = (I + (-1)^b P)/2$ is the projector into the eigenstate of the Pauli operator $P$ with eigenvalue $(-1)^b$, for which $P = \Pi_{P,0} - \Pi_{P,1}$.
  \label{lemma:entanglement}
\end{lemma}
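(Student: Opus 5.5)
The plan is to expand the Bell projector $\phi$ in the Pauli basis and use linearity of $\mathcal{C}$ to turn the entanglement fidelity into a sum of single-qubit Pauli expectation values. These are then rewritten as survival probabilities of the Pauli eigenstates $\Pi_{P,b}$.

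First, I would use the standard identity
\begin{align}
  \phi = \frac14 \sum_{P\in\set{I,X,Y,Z}} P^{*}\otimes P ,
\end{align}
where $P^*$ is complex conjugation in the computational basis. Concretely $I^*=I$, $X^*=X$, $Z^*=Z$ and $Y^*=-Y$. Applying $\mathcal{I}\otimes\mathcal{C}$ and taking the trace against $\phi$ (again expanded in Paulis), I would use $\Tr[(A\otimes B)(C\otimes D)]=\Tr[AC]\Tr[BD]$ together with $\Tr[P^*Q^*]=2\delta_{PQ}$. The cross terms collapse and leave
\begin{align}
  \mathcal{F}(\mathcal{C}) = \frac14\sum_{P\in\set{I,X,Y,Z}} \frac12\Tr\sp{\mathcal{C}(P)P} .
\end{align}
The sign from $Y^*$ appears twice and cancels. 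This coefficient bookkeeping, namely the factors of $1/4$ and $1/2$ and the conjugation signs, is the only step I expect to need care. I would double-check the result on the identity channel, which must give $\mathcal{F}=1$: each term gives $\Tr[P^2]/2=1$, so the sum is $4/4=1$.

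Second, I would isolate the identity term. If $\mathcal{C}$ is trace preserving then $\Tr[\mathcal{C}(I)]=2$, so the $P=I$ term contributes $\frac14$. It remains to evaluate $\frac18\sum_{P\in\set{X,Y,Z}}\Tr[\mathcal{C}(P)P]$.

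Third, I would substitute $P=\Pi_{P,0}-\Pi_{P,1}$ in both slots. Then $\Tr[\mathcal{C}(P)P]$ equals the sum of the two diagonal terms $\Tr[\mathcal{C}(\Pi_{P,b})\Pi_{P,b}]$ minus the two cross terms $\Tr[\mathcal{C}(\Pi_{P,b})\Pi_{P,1-b}]$. By trace preservation, and because $\Pi_{P,0}+\Pi_{P,1}=I$, each cross term equals $1-\Tr[\mathcal{C}(\Pi_{P,b})\Pi_{P,b}]$. Hence
\begin{align}
  \Tr[\mathcal{C}(P)P] = 2\sum_b \Tr[\mathcal{C}(\Pi_{P,b})\Pi_{P,b}] - 2 .
\end{align}
Summing over the three Paulis and multiplying by $\frac18$ gives $\frac14\sum_{P,b}\Tr[\mathcal{C}(\Pi_{P,b})\Pi_{P,b}] - \frac34$. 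Adding the $\frac14$ from the identity term yields the claimed expression with the constant $-\frac12$.

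As a final check, the identity channel gives $\frac14\cdot 6-\frac12=1$, as required.
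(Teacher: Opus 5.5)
Your proof is correct and follows essentially the same route as the paper: reduce the fidelity to $\frac18\sum_{P\in\{I,X,Y,Z\}}\mathrm{Tr}[\mathcal{C}(P)P]$, extract $\frac14$ from the identity term via trace preservation, and convert the $X,Y,Z$ terms into survival probabilities using $P=\Pi_{P,0}-\Pi_{P,1}$ and trace preservation again. The differences are cosmetic: you reach the Pauli form by expanding $\phi$ directly as $\frac14\sum_P P^*\otimes P$, whereas the paper expands in the computational basis and invokes basis independence; and you expand $P$ in both slots using $\Pi_{P,1-b}=I-\Pi_{P,b}$, whereas the paper substitutes $(-1)^bP=2\Pi_{P,b}-I$ in one slot.
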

\begin{proof}
  By direct substitution,
  \begin{align}
    \mathcal{F}(\mathcal{C})
    &= \frac14 \sum_{a,b,c,d\in\set{0,1}}
    \Tr\sp{(\op{a}{b}\otimes\mathcal{C}(\op{a}{b})) (\op{c}{d}\otimes\op{c}{d})} \\
    &= \frac14 \sum_{a,b,c,d\in\set{0,1}}
    \Tr\sp{\op{a}{b}\op{c}{d}} \times \Tr\sp{\mathcal{C}(\op{a}{b})\op{c}{d}},
  \end{align}
  where we used the fact that $\Tr(A\otimes B) = \Tr(A) \times \Tr(B)$.
  Here $\Tr\sp{\op{a}{b}\op{c}{d}} = 1$ if $(c,d) = (b,a)$ and $0$ otherwise, so
  \begin{align}
    \mathcal{F}(\mathcal{C})
    = \frac14 \sum_{a,b\in\set{0,1}} \Tr\sp{\mathcal{C}(\op{a}{b})\op{b}{a}}.
  \end{align}
  The sum over $\op{a}{b}$ can be replaced by a sum over any orthonormal basis $\mathcal{B}_2$ for the space of linear operators on a single-qubit Hilbert space, which is to say that
  \begin{align}
    \mathcal{F}(\mathcal{C})
    = \frac14 \sum_{M\in\mathcal{B}_2} \Tr\sp{\mathcal{C}(M) M^\dag}
  \end{align}
  where $\mathcal{B}_2$ is any basis of $2\times2$ matrices for which $A,B\in\mathcal{B}_2$ implies $\Tr(A B^\dag) = 1$ iff $A=B$ and $0$ otherwise.
  Considering in particular the Pauli basis $\set{I,X,Y,Z}/\sqrt{2}$, we use the facts that $\mathcal{C}$ is a trace-preserving map, meaning $\Tr\sp{\mathcal{C}(I)} = 2$, and expand $P = \Pi_{P,0} - \Pi_{P,1} = \sum_{b\in\set{0,1}} (-1)^b \Pi_{P,b}$ for all $P\in\set{X,Y,Z}$ to write
  \begin{align}
    \mathcal{F}(\mathcal{C})
    &= \frac18 \Tr\sp{\mathcal{C}(I)} + \frac18 \sum_{P\in\set{X,Y,Z}} \Tr\sp{\mathcal{C}(P) P} \\
    &= \frac14 + \frac18 \sum_{\substack{P\in\set{X,Y,Z}\\b\in\set{0,1}}}
    (-1)^b \Tr\sp{\mathcal{C}(\Pi_{P,b}) P}.
  \end{align}
  Substituting $(-1)^b P = 2 \Pi_{P,b} - I$, we then find that
  \begin{align}
    \mathcal{F}(\mathcal{C})
    = \frac14 + \frac14 \sum_{\substack{P\in\set{X,Y,Z}\\b\in\set{0,1}}}
    \Tr\sp{\mathcal{C}(\Pi_{P,b}) \Pi_{P,b}}
    - \frac18 \sum_{\substack{P\in\set{X,Y,Z}\\b\in\set{0,1}}} \Tr\sp{\mathcal{C}(\Pi_{P,b})},
  \end{align}
  where $\Tr\sp{\mathcal{C}(\Pi_{P,b})} = \Tr(\Pi_{P,b}) = 1$ because $\mathcal{C}$ is trace-preserving, so the final sum $\sum_{P,b}\Tr\sp{\mathcal{C}(\Pi_{P,b})}=6$, and altogether
  \begin{align}
    \mathcal{F}(\mathcal{C})
    = \frac14 \sum_{\substack{P\in\set{X,Y,Z}\\b\in\set{0,1}}}
    \Tr\sp{\mathcal{C}(\Pi_{P,b}) \Pi_{P,b}} - \frac12.
    \label{eq:entanglement_final}
  \end{align}
\end{proof}
\begin{lemma}
  If $\mathcal{Q}_p = (1-p)\mathcal{I} + p \mathcal{D}$, where $p\in[0,1]$ is a scalar and $\mathcal{D}$ is a single-qubit channel that maps any input states to the maximally mixed state $I/2$, then
  \begin{align}
    \mathcal{F}(\mathcal{Q}_p^n)
    = \frac14 + \frac34 (1-p)^n.
  \end{align}
\end{lemma}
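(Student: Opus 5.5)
The plan mirrors the proof of \cref{lemma:dephasing}: first collapse the $n$-fold channel into a two-term mixture, then evaluate the entanglement fidelity of each term separately.

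\textbf{Step 1: simplify $\mathcal{Q}_p^n$.} The completely depolarizing channel satisfies $\mathcal{D}\circ\mathcal{C} = \mathcal{D}$ for every trace-preserving channel $\mathcal{C}$, since every input is sent to $I/2$. It also satisfies $\mathcal{C}\circ\mathcal{D} = \mathcal{D}$ whenever $\mathcal{C}$ is unital; here $\mathcal{C}$ is either $\mathcal{I}$ or $\mathcal{D}$, both of which are unital. In particular $\mathcal{D}^2 = \mathcal{D}$ and $\mathcal{I}\circ\mathcal{D} = \mathcal{D}\circ\mathcal{I} = \mathcal{D}$. Expanding $\big((1-p)\mathcal{I} + p\mathcal{D}\big)^n$ binomially, every term containing at least one factor of $\mathcal{D}$ collapses to $\mathcal{D}$. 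This gives
\begin{align}
  \mathcal{Q}_p^n = (1-p)^n\,\mathcal{I} + \big[1-(1-p)^n\big]\,\mathcal{D}.
\end{align}
The same identity can also be proved by a short induction on $n$.

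\textbf{Step 2: evaluate each term.} The entanglement fidelity $\mathcal{F}(\mathcal{C}) = \Tr[(\mathcal{I}\otimes\mathcal{C})(\phi)\,\phi]$ is linear in $\mathcal{C}$, so it suffices to compute $\mathcal{F}(\mathcal{I})$ and $\mathcal{F}(\mathcal{D})$. We have $\mathcal{F}(\mathcal{I}) = \Tr[\phi^2] = 1$, because $\phi$ is pure. Since $\mathcal{I}\otimes\mathcal{D}$ maps $\phi$ to $\Tr_2(\phi)\otimes I/2 = I/4$, we get $\mathcal{F}(\mathcal{D}) = \Tr[(I/4)\,\phi] = 1/4$.

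Alternatively, both values follow from \cref{lemma:entanglement}. For $\mathcal{I}$, each of the six survival probabilities is $\Tr[\Pi_{P,b}^2] = 1$, giving $6/4 - 1/2 = 1$. For $\mathcal{D}$, each survival probability is $\Tr[\Pi_{P,b}/2] = 1/2$, giving $6/8 - 1/2 = 1/4$.

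\textbf{Step 3: combine.} Substituting these values into the decomposition from Step 1,
\begin{align}
  \mathcal{F}(\mathcal{Q}_p^n) = (1-p)^n + \tfrac14\big[1-(1-p)^n\big] = \tfrac14 + \tfrac34(1-p)^n.
\end{align}

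The only step requiring any care is Step 1, specifically checking that $\mathcal{D}$ absorbs products on both sides. Both absorption properties have been verified above, so this step poses no real obstacle.
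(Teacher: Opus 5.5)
Your proof is correct and follows essentially the paper's route. The paper likewise collapses $\mathcal{Q}_p^n$ to $(1-p)^n\mathcal{I} + [1-(1-p)^n]\mathcal{D}$ using $\mathcal{I}\circ\mathcal{D} = \mathcal{D}\circ\mathcal{I} = \mathcal{D}^2 = \mathcal{D}$, then computes $\Tr\sp{\mathcal{Q}_p^n(\Pi_{P,b})\Pi_{P,b}} = \frac12 + \frac12(1-p)^n$ and substitutes into \cref{lemma:entanglement}, which is exactly your ``alternative'' in Step~2. Your primary evaluation uses linearity of $\mathcal{F}$ in the channel, with $\mathcal{F}(\mathcal{I})=1$ and $\mathcal{F}(\mathcal{D})=\frac14$ read off from the Bell-state definition; this is an equally valid small shortcut that bypasses \cref{lemma:entanglement}.
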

\begin{proof}
  Let $\mathcal{Q} = (1-p)\mathcal{I} + p\mathcal{D}$.
  Similarly to the proof of \cref{lemma:dephasing}, observe that $\mathcal{I}\circ\mathcal{D} = \mathcal{D}\circ\mathcal{I} = \mathcal{D}^2 = \mathcal{D}$, so
  \begin{align}
    \mathcal{Q}_p^n = (1-p)^n \mathcal{I} + \sp{1 - (1-p)^n} \mathcal{D},
  \end{align}
  which implies that
  \begin{align}
    \Tr\sp{\mathcal{Q}_p^n(\Pi_{P,b}) \Pi_{P,b}}
    = (1-p)^n \times \Tr\sp{\Pi_{P,b}^2} + \sp{1 - (1-p)^n} \times \frac12 \Tr\sp{\Pi_{P,b}}
    = \frac12 + \frac12 (1-p)^n.
  \end{align}
  By direct substitution into \cref{eq:entanglement_final} (\cref{lemma:entanglement}),
  \begin{align}
    \mathcal{F}(\mathcal{Q}_p^n) = \frac14 + \frac34 (1-p)^n.
  \end{align}
\end{proof}

\section{Additional benchmarking data}
\label{sec:data_tables}

Tables \ref{tab:labs_5}--\ref{tab:hhl_data} provide detailed emulator and hardware data collected for the application benchmarks presented in \cref{sec:results} of the main text.
Data tables indicate the number of logical qubits in an algorithm ($N$), the number of $T$ gates in the executed circuit after decomposition to a Clifford+$T$ gate set ($n_T$), and, if applicable, the QAOA depth $p$.
For reference, data tables also provide observables of interest for noise-free (NF) circuits both before (original) and after (decomposed) decomposition to a Clifford+$T$ gate set.
Results for unencoded (physical, decomposed) QAOA circuits have ``N/A'' for their RUS limit and number of QEC cycles, while all other circuits are encoded into the Steane code.
QEC cycles are performed using the \texttt{Steane-swap} gadget in \cref{fig:t_gate_and_qec}\sref{fig:qec_steane_swap}, for which logical $\ket{0}$ and $\ket{+} = H\ket{0}$ states are prepared using the circuit in \cref{fig:prep_z}.
Logical $T$ gates are implemented using the \texttt{T-swap} gadget in \cref{fig:t_gate_and_qec}\sref{fig:t_gate_swap}, for which logical $\ket{T} = HS\ket{H}$ states are prepared using the circuit in \cref{fig:prep_h_hw}.
``Total shots'' refers to the total number of shots executed on hardware (or an emulator), while ``kept shots'' refers to the number of shots in which all state preparation subroutines succeeded (indicated by zero measurement outcomes on the appropriate flag/ancilla qubits).

The QAOA data tables also report the LABS \emph{merit factor} of a bitstring $\bm s\in\set{+1,-1}^N$, which is defined as
\begin{align}
  F_{\mathrm{LABS}}(\bm s) = \frac{N^2}{2 E_{\mathrm{LABS}}(\bm s)},
\end{align}
and is conjectured to approach $F_{\mathrm{LABS}}^{\mathrm{opt}} \approx 12.32$ as $N\to\infty$~\cite{golay1982merit}.
The merit factor of a QAOA state $\ket{\bm\beta,\bm\gamma}$ is
\begin{align}
  F_{\mathrm{LABS}}(\bm\beta,\bm\gamma)
  = \frac{N^2}{2\braket{\bm\beta,\bm\gamma|H_{\mathrm{LABS}}|\bm\beta,\bm\gamma}}.
  \label{eq:LABS_merit}
\end{align}

\begin{sidewaystable}[h]
  \centering
  \caption{
    Results from running QAOA for $N=5$ LABS on \texttt{H2-2} and \texttt{Helios} computers, as well as the \texttt{H2-2E} emulator.
    Circuits with QEC cycles apply one QEC cycle to each logical qubit in the middle of the logical circuit, evenly partitioning the logical two-qubit gates.
  }
  \vspace{.5em}
  \setlength{\tabcolsep}{2pt}
  \renewcommand{\arraystretch}{1.2}
  \begin{tabular}{|l||c||c||*{2}{c|}|*{4}{c|}|*{3}{c|}}
    \hline
    {\textbf{logical qubits}}
    & \multicolumn{11}{c|}{$N = 5$}
    \\ \hline
    {\textbf{circuit parameters}}
    & \multicolumn{2}{c||}{$(p, n_T) = (1, 9)$}
    & \multicolumn{2}{c||}{$(p, n_T) = (2, 5)$}
    & \multicolumn{7}{c|}{$(p, n_T) = (2, 9)$}
    \\ \hline \hline
    \textbf{success probability (original, NF)}
    & \multicolumn{2}{c||}{0.782}
    & \multicolumn{2}{c||}{0.935}
    & \multicolumn{7}{c|}{0.935}
    \\ \hline
    \textbf{success probability (decomposed, NF)}
    & \multicolumn{2}{c||}{0.694}
    & \multicolumn{2}{c||}{0.625}
    & \multicolumn{7}{c|}{0.854}
    \\ \hline
    \textbf{merit factor (original, NF)}
    & \multicolumn{2}{c||}{5.174}
    & \multicolumn{2}{c||}{5.931}
    & \multicolumn{7}{c|}{5.931}
    \\ \hline
    \textbf{merit factor (decomposed, NF)}
    & \multicolumn{2}{c||}{4.641}
    & \multicolumn{2}{c||}{4.375}
    & \multicolumn{7}{c|}{5.396}
    \\ \hline \hline
    \textbf{device}
    & \multicolumn{1}{c||}{\texttt{H2-2E}}
    & \multicolumn{1}{c||}{\texttt{H2-2}}
    & \multicolumn{2}{c||}{\texttt{H2-2}}
    & \multicolumn{4}{c||}{\texttt{H2-2}}
    & \multicolumn{3}{c|}{\texttt{Helios}}
    \\ \hline
    \textbf{RUS limit}
    & N/A & 1
    & N/A & 1
    & N/A & 1 & 1 & 2
    & 1 & 1 & 2
    \\ \hline
    \textbf{QEC cycles}
    & N/A & 0
    & N/A & 0
    & N/A & 0 & 5 & 0
    & 0 & 5 & 0
    \\ \hline
    \textbf{physical qubits}
    & 5 & 48
    & 5 & 48
    & 5 & \multicolumn{3}{c||}{48}
    & \multicolumn{3}{c|}{48}
    \\ \hline
    \textbf{physical 2-qubit gates}
    & 11 & 474
    & 22 & 399
    & 22 & 551 & 731 & $\le 885$
    & 551 & 731 & $\le 885$
    \\ \hline
    \textbf{estimated NF runtime (seconds)}
    & -- & --
    & -- & --
    & -- & -- & -- & --
    & 1.37 & 1.92 & --
    \\ \hline \hline
    \textbf{success probability}
    & 0.522(22) & 0.466(20)
    & 0.484(22) & 0.453(29)
    & 0.592(22) & 0.587(19) & 0.491(31) & 0.500(23)
    & 0.545(19) & 0.567(19) & 0.544(22)
    \\ \hline
    \textbf{merit factor}
    & 4.65(11) & 4.25(10)
    & 4.43(11) & 4.15(15)
    & 5.34(09) & 4.97(09) & 4.39(16) & 4.53(11)
    & 4.83(10) & 4.83(10) & 4.74(11)
    \\ \hline
    \textbf{kept/total shots}
    & 500/500 & 625/1000
    & 500/500 & 285/400
    & 500/500 & 664/1000 & 267/500 & 488/500
    & 662/1000 & 652/1000 & 493/500
    \\ \hline
    \textbf{post-selection rate}
    & 1 & 0.625
    & 1 & 0.7125
    & 1 & 0.664 & 0.534 & 0.976
    & 0.662 & 0.652 & 0.986
    \\ \hline
  \end{tabular}
  \label{tab:labs_5}
\end{sidewaystable}

\begin{sidewaystable}[h]
  \centering
  \caption{
    Results from running QAOA for $N=6$ LABS on \texttt{H2-2} and \texttt{Helios} computers, as well as the \texttt{H2-2E} emulator.
    Circuits with QEC cycles apply one QEC cycle to each logical qubit in the middle of the logical circuit, evenly partitioning the logical two-qubit gates.
  }
  \vspace{.5em}
  \setlength{\tabcolsep}{2pt}
  \renewcommand{\arraystretch}{1.2}
  \begin{tabular}{|l||c|c||c|c||c||c||c||c|c|c|}
    \hline
    {\textbf{logical qubits}}
    & \multicolumn{10}{c|}{$N = 6$}
    \\ \hline
    {\textbf{circuit parameters}}
    & \multicolumn{4}{c||}{$(p, n_T) = (1, 13)$}
    & \multicolumn{2}{c||}{$(p, n_T) = (1, 6)$}
    & \multicolumn{4}{c|}{$(p, n_T) = (2, 0)$}
    \\ \hline \hline
    \textbf{success probability (original, NF)}
    & \multicolumn{4}{c||}{0.915}
    & \multicolumn{2}{c||}{0.915}
    & \multicolumn{4}{c|}{1}
    \\ \hline
    \textbf{success probability (decomposed, NF)}
    & \multicolumn{4}{c||}{0.979}
    & \multicolumn{2}{c||}{0.672}
    & \multicolumn{4}{c|}{1}
    \\ \hline
    \textbf{merit factor (original, NF)}
    & \multicolumn{4}{c||}{2.438}
    & \multicolumn{2}{c||}{2.438}
    & \multicolumn{4}{c|}{2.571}
    \\ \hline
    \textbf{merit factor (decomposed, NF)}
    & \multicolumn{4}{c||}{2.523}
    & \multicolumn{2}{c||}{1.998}
    & \multicolumn{4}{c|}{2.571}
    \\ \hline \hline
    \textbf{device}
    & \multicolumn{2}{c||}{\texttt{H2-2}}
    & \multicolumn{2}{c||}{\texttt{Helios}}
    & \multicolumn{1}{c||}{\texttt{H2-2E}}
    & \multicolumn{1}{c||}{\texttt{H2-2}}
    & \multicolumn{1}{c||}{\texttt{H2-2}}
    & \multicolumn{3}{c|}{\texttt{Helios}}
    \\ \hline
    \textbf{RUS limit}
    & N/A & 1
    & 1 & 2
    & N/A
    & 1
    & 1
    & N/A & 1 & 1
    \\ \hline
    \textbf{QEC cycles}
    & N/A & 0
    & 0 & 0
    & N/A
    & 0
    & 0
    & N/A & 0 & 6
    \\ \hline
    \textbf{physical qubits}
    & 6 & 55
    & \multicolumn{2}{c||}{55}
    & 6 & 55
    & 55
    & 6 & \multicolumn{2}{c|}{55}
    \\ \hline
    \textbf{physical 2-qubit gates}
    & 18 & 686
    & 686 & $\le 1155$
    & 18
    & 420
    & 318
    & 36 & 318 & 618
    \\ \hline
    \textbf{estimated NF runtime (seconds)}
    & -- & --
    & 1.78 & --
    & --
    & --
    & --
    & 0.19 & 0.64 & 1.33
    \\ \hline \hline
    \textbf{success probability}
    & 0.957(12) & 0.784(28)
    & 0.562(33) & 0.628(22)
    & 0.668(21)
    & 0.582(27)
    & 0.893(15)
    & 0.982(06) & 0.935(11) & 0.860(17)
    \\ \hline
    \textbf{merit factor}
    & 2.484(24) & 2.252(42)
    & 1.927(50) & 2.017(34)
    & 1.985(38)
    & 1.874(47)
    & 2.417(22)
    & 2.546(08) & 2.481(16) & 2.378(23)
    \\ \hline
    \textbf{kept/total shots}
    & 300/300 & 218/500
    & 226/500 & 479/500
    & 500/500
    & 323/500
    & 422/500
    & 500/500 & 474/500 & 435/500
    \\ \hline
    \textbf{post-selection rate}
    & 1 & 0.436
    & 0.452 & 0.958
    & 1
    & 0.646
    & 0.844
    & 1 & 0.948 & 0.870
    \\ \hline
  \end{tabular}
  \label{tab:labs_6}
\end{sidewaystable}

\begin{table}[h]
  \centering
  \caption{
    Results from running QAOA for $N=7,8,9$ LABS on the \texttt{Helios} computer.
  }
  \vspace{.5em}
  \setlength{\tabcolsep}{2pt}
  \renewcommand{\arraystretch}{1.2}
  \begin{tabular}{|l||c|c||c|c||c|c|}
    \hline
    {\textbf{logical qubits}}
    & \multicolumn{2}{c||}{$N = 7$}
    & \multicolumn{2}{c||}{$N = 8$}
    & \multicolumn{2}{c|}{$N = 9$}
    \\ \hline
    {\textbf{circuit parameters}}
    & \multicolumn{2}{c||}{$(p, n_T) = (1, 0)$}
    & \multicolumn{2}{c||}{$(p, n_T) = (1, 8)$}
    & \multicolumn{2}{c|}{$(p, n_T) = (1, 25)$}
    \\ \hline \hline
    \textbf{success probability (original, NF)}
    & \multicolumn{2}{c||}{0.118}
    & \multicolumn{2}{c||}{0.170}
    & \multicolumn{2}{c|}{0.122}
    \\ \hline
    \textbf{success probability (decomposed, NF)}
    & \multicolumn{2}{c||}{0.125}
    & \multicolumn{2}{c||}{0.125}
    & \multicolumn{2}{c|}{0.108}
    \\ \hline
    \textbf{merit factor (original, NF)}
    & \multicolumn{2}{c||}{2.792}
    & \multicolumn{2}{c||}{1.806}
    & \multicolumn{2}{c|}{1.862}
    \\ \hline
    \textbf{merit factor (decomposed, NF)}
    & \multicolumn{2}{c||}{2.880}
    & \multicolumn{2}{c||}{1.728}
    & \multicolumn{2}{c|}{1.764}
    \\ \hline \hline
    \textbf{device}
    & \multicolumn{6}{c|}{\texttt{Helios}}
    \\ \hline
    \textbf{RUS limit}
    & N/A & 1
    & N/A & 1
    & N/A & 1
    \\ \hline
    \textbf{QEC cycles}
    & N/A & 0
    & N/A & 0
    & N/A & 0
    \\ \hline
    \textbf{physical qubits}
    & 7 & 62
    & 8 & 69
    & 9 & 76
    \\ \hline
    \textbf{physical 2-qubit gates}
    & 37 & 336
    & 49 & 735
    & 78 & 1595
    \\ \hline
    \textbf{estimated NF runtime (seconds)}
    & 0.17 & 0.69
    & 0.21 & 1.69
    & 0.29 & 3.78
    \\ \hline \hline
    \textbf{success probability}
    & 0.136(15) & 0.109(14)
    & 0.102(14) & 0.108(17)
    & 0.130(15) & 0.061(20)
    \\ \hline
    \textbf{merit factor}
    & 2.998(101) & 2.770(97)
    & 1.710(47) & 1.724(58)
    & 1.842(40) & 1.682(20)
    \\ \hline
    \textbf{kept/total shots}
    & 500/500 & 469/500
    & 500/500 & 332/500
    & 500/500 & 148/500
    \\ \hline
    \textbf{post-selection rate}
    & 1 & 0.938
    & 1 & 0.664
    & 1 & 0.296
    \\ \hline
  \end{tabular}
  \label{tab:labs_hero_run}
\end{table}

\begin{table}[h]
  \centering
  \caption{
    Results from running QAOA on a selected $N=12$ portfolio optimization instance on the \texttt{Helios} computer.
  }
  \vspace{.5em}
  \setlength{\tabcolsep}{2pt}
  \renewcommand{\arraystretch}{1.2}
  \begin{tabular}{|l||c|c||c|c|}
    \hline
    {\textbf{logical qubits}}
    & \multicolumn{4}{c|}{$N = 12$}
    \\ \hline
    {\textbf{circuit parameters}}
    & \multicolumn{2}{c||}{$(p, n_T) = (1, 2)$}
    & \multicolumn{2}{c|}{$(p, n_T) = (2, 4)$}
    \\ \hline \hline
    \textbf{success probability (original, NF)}
    & \multicolumn{2}{c||}{0.2672}
    & \multicolumn{2}{c|}{0.7077}
    \\ \hline
    \textbf{success probability (decomposed, NF)}
    & \multicolumn{2}{c||}{0.1821}
    & \multicolumn{2}{c|}{0.2124}
    \\ \hline
    \textbf{success probability (random guess)}
    & \multicolumn{4}{c|}{$1/2^{12}\approx0.00025$}
    \\ \hline
    \textbf{approximation ratio (original, NF)}
    & \multicolumn{2}{c||}{0.9258}
    & \multicolumn{2}{c|}{0.9844}
    \\ \hline
    \textbf{approximation ratio (decomposed, NF)}
    & \multicolumn{2}{c||}{0.9481}
    & \multicolumn{2}{c|}{0.9569}
    \\ \hline
    \textbf{approximation ratio (random guess)}
    & \multicolumn{4}{c|}{0.5038}
    \\ \hline \hline
    \textbf{device}
    & \multicolumn{4}{c|}{\texttt{Helios}}
    \\ \hline
    \textbf{RUS limit}
    & N/A & 1
    & N/A & 1
    \\ \hline
    \textbf{QEC cycles}
    & N/A & 0
    & N/A & 0
    \\ \hline
    \textbf{physical qubits}
    & 12 & 97
    & 12 & 97
    \\ \hline
    \textbf{physical 2-qubit gates}
    & 132 & 1132
    & 264 & 2132
    \\ \hline
    \textbf{estimated NF runtime (seconds)}
    & 0.267 & 1.64
    & 0.436 & 2.66
    \\ \hline \hline
    \textbf{success probability}
    & 0.136(5) & 0.131(17)
    & 0.180(7) & 0.034(10)
    \\ \hline
    \textbf{approximation ratio}
    & 0.9224(45) & 0.8743(64)
    & 0.9093(60) & 0.6991(115)
    \\ \hline
    \textbf{kept/total shots}
    & 500/500 & 411/500
    & 500/500 & 349/500
    \\ \hline
    \textbf{post-selection rate}
    & 1 & 0.822
    & 1 & 0.698
    \\ \hline
  \end{tabular}
  \label{tab:qaoa_po}
\end{table}

\begin{sidewaystable}[h]
  \centering
  \caption{
    Results from running the decomposed HHL circuit in \cref{fig:hhl_compiled} on the \texttt{Helios} computer.
    Data with an effective RUS limit of $r_{\mathrm{eff}}$ is post-selected on the shots in which all RUS routines succeeded in at most $r_{\mathrm{eff}}$ attempts, thereby controlling for the effects of a different RUS limit in the program submitted to hardware.
    Uncertainties in the final digits of fidelities reflect 68\% confidence intervals determined by bootstrapping.
    The total number of shots for all experiments is 1000.
    In addition to the number of kept shots (those in which all state preparation subroutines succeeded, indicated by zero measurement outcomes on the appropriate flag/ancilla qubits, $\texttt{SP}=0$), more granular data is provided on the number of shots that were kept and have logical \texttt{QPE} and \texttt{EV} measurement outcomes as indicated. %
    Fidelities are computed from the shots with $\texttt{SP} = \texttt{QPE} = \texttt{EV} = 0$.
  }
  \vspace{.5em}
  \setlength{\tabcolsep}{2pt}
  \renewcommand{\arraystretch}{1.2}
  \begin{tabular}{|l||c||c|c|c||c|c||c|c|c||c|c|c|}
    \hline
    {\textbf{logical qubits}}
    & \multicolumn{12}{c|}{$N = 5$}
    \\ \hline
    {\textbf{circuit parameters}}
    & \multicolumn{12}{c|}{$n_T = 14$}
    \\ \hline \hline
    {\textbf{device}}
    & \multicolumn{12}{c|}{\texttt{Helios}}
    \\ \hline
    {\textbf{RUS limit}}
    & N/A
    & \multicolumn{3}{c||}{1}
    & \multicolumn{2}{c||}{2}
    & \multicolumn{6}{c|}{3}
    \\ \hline
    \textbf{effective RUS limit}
    & N/A
    & \multicolumn{3}{c||}{1}
    & 1 & 2
    & 1 & 2 & 3
    & 1 & 2 & 3
    \\ \hline
    {\textbf{QEC cycles}}
    & N/A
    & 0 & 2 & 4
    & \multicolumn{2}{c||}{0}
    & \multicolumn{3}{c||}{0}
    & \multicolumn{3}{c|}{2}
    \\ \hline
    \textbf{physical qubits}
    & 5
    & \multicolumn{3}{c||}{48}
    & \multicolumn{2}{c||}{48}
    & \multicolumn{3}{c||}{48}
    & \multicolumn{3}{c|}{48}
    \\ \hline
    \textbf{physical 2-qubit gates}
    & 15
    & 677 & 749 & 821
    & 677 & $\le 1166$
    & 677 & $\le 1166$ & $\le 1655$
    & 749 & $\le 1282$ & $\le 1815$
    \\ \hline
    \textbf{estimated NF runtime (seconds)}
    & 0.09
    & 1.79 & 2.02 & 2.26
    & 1.83 & --
    & 1.83 & -- & --
    & 2.06 & -- & --
    \\ \hline \hline
    \textbf{fidelity $\mathcal{F}_{\ket{\chi}}$}
    & 0.9995(115)
    & 0.901(30) & 0.966(23) & 0.871(38)
    & 0.825(36) & 0.832(26)
    & 0.778(40) & 0.778(28) & 0.773(27)
    & 0.839(38) & 0.830(28) & 0.827(27)
    \\ \hline
    \textbf{kept shots ($\texttt{SP}=0$)}
    & 1000
    & 500 & 511 & 412
    & 505 & 970
    & 470 & 960 & 998
    & 488 & 967 & 999
    \\ \hline
    \textbf{kept shots w/} $\texttt{QPE}=0$, $\texttt{EV}=0$
    & 748
    & 272 & 293 & 223
    & 240 & 441
    & 226 & 449 & 466
    & 214 & 401 & 415
    \\ \hline
    \textbf{kept shots w/} $\texttt{QPE}=0$, $\texttt{EV}=1$
    & 238
    & 88 & 110 & 89
    & 104 & 209
    & 89 & 198 & 205
    & 89 & 188 & 194
    \\ \hline
    \textbf{kept shots w/} $\texttt{QPE}\ne0$, $\texttt{EV}=0$
    & 5
    & 93 & 65 & 64
    & 93 & 185
    & 93 & 174 & 184
    & 99 & 213 & 222
    \\ \hline
    \textbf{kept shots w/} $\texttt{QPE}\ne0$, $\texttt{EV}=1$
    & 9
    & 47 & 43 & 36
    & 68 & 135
    & 62 & 139 & 143
    & 86 & 165 & 168
    \\ \hline
  \end{tabular}
  \label{tab:hhl_data}
\end{sidewaystable}

\section{Logical errors and correlated decoding}
\label{sec:decoding}

Here we consider a toy example of a logical circuit in which physical error rates of $p$ lead to logical error rates of $\mathcal{O}(p)$, even when encoding into the Steane code using fault-tolerant gadgets:
\begin{align}
  \includegraphics[valign=c]{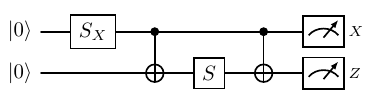}.
  \label{eq:decoding_example}
\end{align}
The two-qubit state in this circuit transforms as
\begin{align}
  \ket{00}
  \xrightarrow{S_X} \ket{00} - i\ket{10}
  \xrightarrow{CX} \ket{00} - i\ket{11}
  \xrightarrow{S} \ket{00} + \ket{11}
  \xrightarrow{CX} \ket{00} + \ket{10}
  = \ket{+} \otimes \ket{0}.
\end{align}
In the absence of errors, both final measurement outcomes should therefore be $0$.
However, for CSS codes such as the Steane code, fault tolerance allows state preparation subroutines to end with weight-2 errors of mixed ($X$ and $Z$) type, such as $X_0 Z_1$.
In turn, $S$-type gates can change the type of one of these errors, for example with an $S_X$ gate converting $X_0 Z_1 \to X_0 X_1 \cdot Z_1$, and logical CNOT gates can propagate such errors between logical qubits in such a way that naive, uncorrelated decoding leads to a logical error.
As a concrete example, consider the propagation of a two-qubit gate error in the $\ket{0}$-state preparation circuit of the first logical qubit when encoding into the Steane code:
\begin{align}
  \includegraphics[valign=c,scale=0.8]{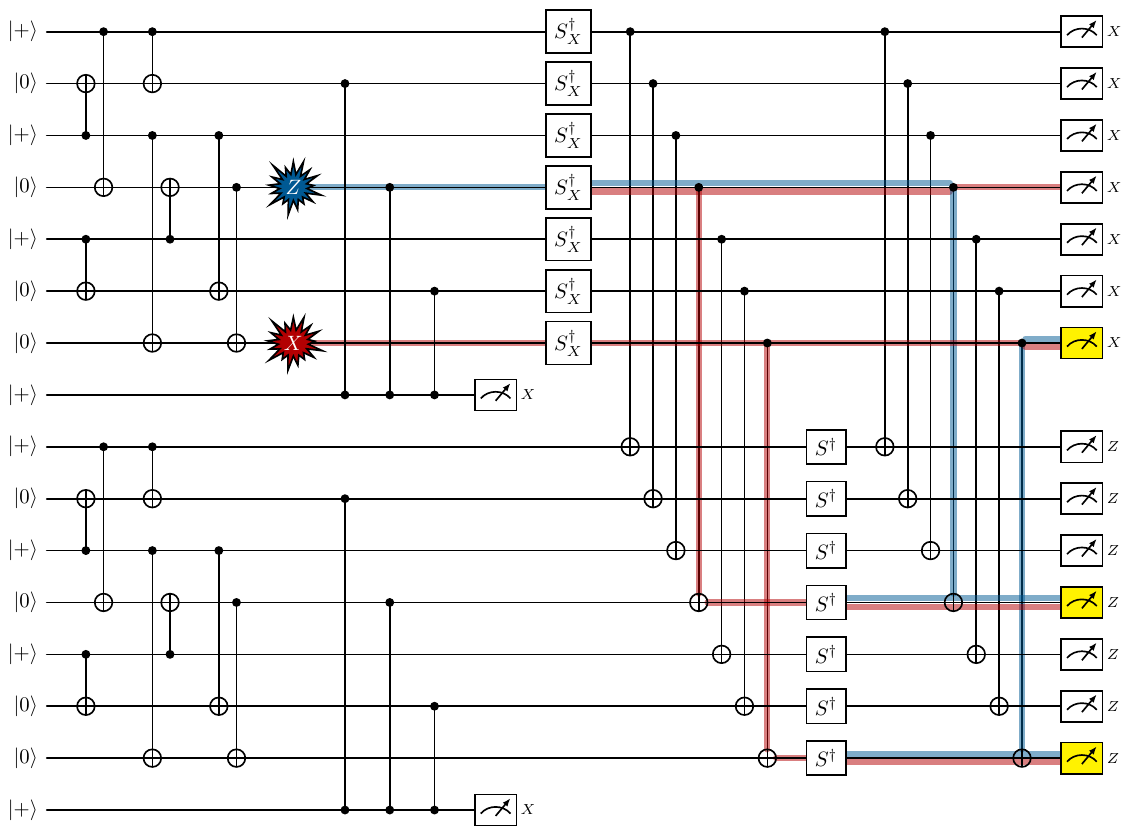}.
\end{align}
While the measurement error on the top logical qubit would be corrected by a naive decoder, the measurement errors on the bottom logical qubit would be decoded into a logical error, which occurs with (at least) the same probability as that of the propagated physical error.
This example illustrates that merely substituting physical gates by fault-tolerant gadgets is insufficient for a logical circuit to be fault tolerant, and that suppressing logical errors beyond $\mathcal{O}(p)$ therefore requires additional measures such as the insertion of QEC cycles to construct extended rectangles and the use of more sophisticated decoding strategies. 

\clearpage
\section*{Disclaimer}

This paper was prepared for informational purposes with contributions from Global Technology Applied Research center of JPMorgan Chase \& Co. This paper is not a product of the Research Department of JPMorgan Chase \& Co. or its affiliates. Neither JPMorgan Chase \& Co. nor any of its affiliates makes any explicit or implied representation or warranty and none of them accept any liability in connection with this paper, including, without limitation, with respect to the completeness, accuracy, or reliability of the information contained herein and the potential legal, compliance, tax, or accounting effects thereof. This document is not intended as investment research or investment advice, or as a recommendation, offer, or solicitation for the purchase or sale of any security, financial instrument, financial product or service, or to be used in any way for evaluating the merits of participating in any transaction.

\end{document}